\newtheorem{Theorem}{Theorem}
\newtheorem{Corollary}{Corollary}
\newtheorem{Definition}{Definition}
\begin{document}
	%
	
	\title{Incorporating TSN/BLS in AFDX for Mixed-Criticality Avionics Applications: Specification and Analysis}
	\author{A. FINZI, A. MIFDAOUI, F. FRANCES, E. LOCHIN\\
	University of Toulouse-ISAE, France\\}


		\maketitle
	\begin{abstract}
		In this paper, we propose an extension of the AFDX standard, incorporating a TSN/BLS shaper, to homogenize the avionics communication architecture, and enable the interconnection of different avionics domains with mixed-criticality levels, e.g., legacy AFDX traffic, Flight Control and In-Flight Entertainment. First, we present the main specifications of such a proposed solution. Then, we detail the corresponding worst-case timing analysis, using the Network Calculus framework, to infer real-time guarantees. Finally, we conduct the performance analysis of such a proposal on a realistic AFDX configuration. Results show the efficiency of the Extended AFDX standard to noticeably enhance the medium priority level delay bounds, while respecting the higher priority level constraints, in comparison with the legacy AFDX standard. 
	\end{abstract}
	

	\section{Introduction}
	\label{intro}
	The growing number of interconnected end-systems and the expansion of exchanged data in avionics have led to an increase in complexity of the communication architecture. To cope with this trend, a first communication solution based on a high rate backbone network, i.e., the AFDX (Avionics Full Duplex Switched Ethernet) \cite{ARINC664}, has been implemented by Airbus in the A380, to interconnect critical subsystems. More recently, some low rate data buses, e.g., CAN \cite{CAN}, have been introduced in the A350 and the A400M to handle some specific avionics domains, such as the I/O process and the Flight Control Management. Although this architecture reduces the time to market, it conjointly leads to inherent heterogeneity and new challenges to guarantee the real-time requirements.
	
	To cope with these emerging issues, with the maturity and reliability progress of the AFDX after a decade of successful use, a homogeneous avionic communication architecture based on such a technology to interconnect different avionics domains may bring significant advantages, such as quick installation and maintenance and reduced weight and costs. Furthermore, this new communication architecture needs to support, in addition to the legacy AFDX traffic profile, called Rate Constrained (RC) traffic, at least two extra profiles. The first, denoted by Safety-Critical Traffic (SCT), is specified to support flows with hard real-time constraints and the highest priority level, e.g., flight control data; whereas the second is for Best-Effort (BE) flows with no delivery constraint and the lowest priority level, e.g., In-Flight Entertainment traffic. It is worth noting that such a priority assignment depending on the traffic criticality is usually used in avionics to facilitate the certification process.
	
	Hence, an ultimate avionic communication architecture has to fulfil the following necessary conditions:
	
	\textbf{Condition 1:} keeping the low (re)configuration effort and high modularity level, guaranteed by the current AFDX standard;
	
	\textbf{Condition 2:} supporting mixed-criticality applications while guaranteeing the predictability requirement, i.e., the existence of a bounded latency for each traffic flow;
	
	\textbf{Condition 3:} enforcing the Quality of Service (QoS) features while limiting the impact of the highest priority traffic on the legacy AFDX traffic. 
	
	There are many existing solutions to support mixed-criticality applications in embedded systems, and particularly in avionics \cite{steiner2009ttethernet} \cite{TSN} \cite{Teener2012} \cite{tianran2012design} \cite{Coelho2014}. However, each one satisfies some of the aforementioned conditions better than others, but there is no solution meeting all of them. 
	
	%
	Therefore, our main contributions in this paper are threefold: (i) \textbf{first}, the specification of an extended AFDX standard satisfying the three aforementioned conditions, based on the Bust-Limiting Shaper (BLS) \cite{Gotz2012} defined in Time Sensitive Networking (TSN) task group \cite{TSN}, in \textbf{Section \ref{SpecExtended}}; (ii) \textbf{second}, an appropriate system modeling and timing analysis, based on the Network Calculus framework \cite{leboudecthiran12}, to evaluate its impact on SCT and RC guarantees in \textbf{Sections \ref{TAM}} and \textbf{\ref{SC-NC}}; (iii) \textbf{third}, in \textbf{Section \ref{PA}}, the validation of such a proposal in the case of a realistic avionics network, interconnecting more than 60 end-systems and varying the maximum utilisation rate of SCT and RC traffic, to prove its efficiency to guarantee SCT traffic constraints, while enhancing the RC guarantees. 
	
	\section{Related Work}
	In this section, we review the most relevant solutions to support mixed-criticality applications in embedded systems, and the main worst-case timing analyses of the TSN/BLS shaper.
	
	\subsection{Supporting Mixed-criticality Applications in Embedded Systems }
	Various solutions have been proposed in the literature to support mixed-criticality applications in embedded systems and particularly in avionics \cite{Steiner14}. These solutions can be actually categorised according to the implemented communication paradigm, i.e., mainly event-triggered or time-triggered. This parameter is of utmost importance to quantify the reconfiguration effort needed by the alternative avionics communication architecture, in comparison to the current AFDX standard (which is an event-triggered solution). Furthermore, it conditions the modularity level of the selected solution. The event-triggered paradigm is known as highly flexible and facilitates the system reconfiguration, but it infers at the same time an indeterminism level and needs further proofs to verify the predictability requirement. On the other hand, the time-triggered paradigm is highly predictable, but presents some limitations in terms of system reconfigurability. The considered solutions in this area vs the three specification conditions of an alternative communication architecture, described in Section \ref{intro}, are illustrated in Table \ref{benchmarking}. 
	\begin{table}[h!]
		\footnotesize
		\begin{center}
			\begin{tabular} {|c|c|c|c|c|c|c|c|}
				\hline
				Solutions &  TTE & TAS & PS & BLS & AVB & NP-SP & WRR \\
				\hline
				refs. &   \cite{steiner2009ttethernet} & \cite{thiele2015formal} & \cite{Teener2012} & \cite{Gotz2012} &  \cite{Qav21009} & \cite{Coelho2014}  & \cite{tianran2012design} \\
				\hline
				Cond. 1 & & & & X & X & X &  \\
				\hline
				Cond. 2 & X & X & X & X &  & X & X  \\
				\hline
				Cond. 3 &  &  &  & X & X & & X \\
				\hline
			\end{tabular}
		\end{center}
		\footnotesize \caption{Existing solutions vs specification conditions}
		\label{benchmarking}
	\end{table}

	
	The main relevant solutions implementing time-triggered paradigm on top of Switched Ethernet are Time Triggered Ethernet (TTE) \cite{steiner2009ttethernet} and two solutions under standardisation in IEEE 802 Time Sensitive Networking (TSN) Task group \cite{TSN}, i.e., primarily driven by automotive. Both TSN solutions are based on new shapers, implemented on top of Non-Preemptive Static Priority (NP-SP) scheduler: Time-Aware Shaper (TAS)  \cite{thiele2015formal} and Peristaltic Shaper (PS) \cite{Teener2012} \cite{thiele2015formal}.  All of these solutions aim to drastically reduce the jitter of highest priority traffic, using a time scheduler and/or an adequate synchronisation protocol. These facts imply the non fulfilment of both specification conditions 1 and 3, which make these solutions inadequate in our context.
	
	Among the most interesting solutions based on event-triggered paradigm, we distinguish two classes of solutions. The first class is extending the AFDX standard with well-known scheduling schemes, e.g., NP-SP with n priority levels \cite{Coelho2014} satisfying only conditions 1 and 2 due to the starvation risk for lower priorities, and Weighted-Round-Robin (WRR) \cite{tianran2012design} meeting only conditions 2 and 3 due to the necessary weights setting, known to be difficult to tune. The second class in this category is integrating credit-based shapers to control generally the highest priority level, in order to limit its impact on lower priority ones and to guarantee real-time communication. This idea has been initiated for Ethernet \cite {Kweon00}, then applied to Switched Ethernet \cite{loeser2004low} \cite{mifdaoui2006real}. More recently, it has been integrated in Ethernet AVB \cite{Qav21009}. However, the implemented behaviour in AVB, i.e., the credit becomes negative as soon as one high priority frame is sent, has led to undesirable high latencies for critical traffic. Consequently, this solution satisfies only conditions 1 and 3. On the other hand,  there is an interesting solution in TSN group, based on the Burst Limiting Shaper (BLS)\cite{Gotz2012} on top of NP-SP scheduler \cite{thiele2016formal}, which handles the limitations of AVB standard while keeping the event-triggered paradigm of Switched Ethernet.  These characteristics meet all the specification conditions. Therefore, it is considered herein as the most interesting solution to be incorporated within the AFDX standard, to enable an homogeneous avionics communication architecture. However, the BLS shaper may lead at the same time to increasing communication latencies for the highest priority traffic, thus requiring real-time constraints verification. Therefore, an appropriate timing analysis to provide worst-case delays has to be considered. An overview of the related work concerning this issue is presented in the next section.
	
	\subsection{Worst-case Timing Analysis of TSN/BLS Shaper}
	There are some interesting approaches in the literature concerning the worst-case timing analysis of TSN network, and more particularly BLS shaper. The first and seminal one in \cite{Kerschbaum2013} introduces a first service curve model to induce worst-case delay computation. However, this presentation published by the TSN task group has never been extended in a formal paper. The second one has detailed a more formal worst-case timing analysis in \cite{thangamuthu2015analysis}. However, this approach has some limitations. Basically, the proposed model does not take into account the impact of either the same priority flows or the higher ones, which will clearly induce optimistic worst-case delays. The last and more recent one in \cite{thiele2016formal} has proposed a formal analysis of TSN/BLS shaper, based on a Compositional Performance Analysis (CPA) method. This approach has handled the main limitations of the model presented in \cite{thangamuthu2015analysis}; and interesting results for an automotive case study have been detailed. The impact of BLS on the highest priority traffic has been showed to deteriorate its timing performance, in comparison with a classic NP-SP scheduler.
	
	However, in this paper, our main objective is different from \cite{thiele2016formal} and consists in incorporating BLS in AFDX, denoted as Extended AFDX, to guarantee the highest priority traffic deadline, while limiting its impact on the medium one, i.e., RC. Moreover, our worst-case timing analysis is based on the Network Calculus framework, which has been proved as highly modular and scalable, in comparison with CPA method \cite{Perathoner08}, and very effective to prove the certification requirements of avionics applications \cite{grieu2004analyse}. Several existing works have used Network Calculus to analyse the timing performance of Switched Ethernet and AFDX \cite{grieu2004analyse} \cite{zhao2017timing} \cite{loeser2004low} \cite{fidler2005traffic}. However, to the best of our knowledge, the issue of modeling and analysing the TSN/BLS on top of a NP-SP scheduler using the Network Calculus has not been handled yet in the literature. 
	
	\section{Specification of Extended AFDX }
	\label{SpecExtended}
	In this section, we first describe the Extended AFDX switch architecture, implementing the TSN/BLS on top of a NP-SP scheduler. Then, we detail the BLS behaviour and its main parameters.
	
	\subsection{The Extended AFDX Switch}
	The aim of extending the AFDX switch architecture with the TSN/BLS  is to handle mixed criticality data, and more specifically three AFDX traffic profiles, as illustrated in Fig.\ref{fig:sw_archi}: (i) the Safety-Critical Traffic (SCT) with its priority set by the BLS and the tightest temporal deadline, e.g., Flight-control flows; (ii) Rate Constraint traffic (RC) with the medium priority and a deadline constraint to guarantee, e.g., legacy AFDX flows; (iii) The best-effort traffic (BE) with the lowest priority and no time constraint, e.g., In-Flight Entertainment. 
	
	\begin{figure}[h]
		\centering
		\includegraphics[width=0.40\textwidth]{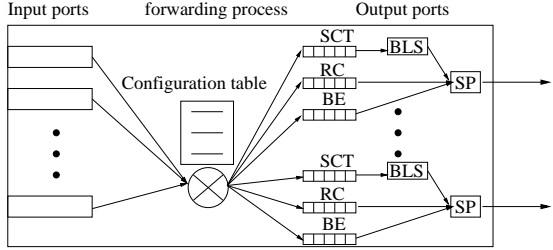}
		\footnotesize \caption{An Extended AFDX switch architecture}
		\label{fig:sw_archi}
	\end{figure}
	
	The legacy AFDX standard manages the exchanged data through the Virtual Link (VL) concept. This concept provides a way to reserve a guaranteed bandwidth for each traffic flow. The VL represents a multicast communication, which originates at a single End System and delivers packets to a fixed set of End Systems. Each VL is characterized by: (i) BAG (Bandwidth Allocation Gap), ranging in powers of 2 from 1 to 128 milliseconds, which represents the minimal inter-arrival time between two consecutive frames; (ii) MFS (Maximal Frame Size), ranging from 64 to 1518 bytes, which represents the size of the largest frame sent during each BAG. Furthermore, the legacy AFDX specifies a NP-SP scheduler based on two priority levels within End Systems and switches to enable the QoS features.
	
	In Fig.\ref{fig:sw_archi}, we illustrate the architecture of the extended AFDX switch. It consists of: (i) store and forward input ports to verify each frame correctness before sending it to the corresponding output port; (ii) a static configuration table to forward the received frames to the correct output port(s) based on their VL identifier; (iii) the output ports with three priority queues, multiplexed with a NP-SP scheduler, and the highest one is shaped with the BLS. 
	
	The current AFDX switch distinguishes the flow priority level based on its VL identifier stored in the static configuration table, i.e., for each VL identifier, there is a predefined priority level stored in the table. Hence, to manage both extra AFDX profiles, i.e., SCT and BE, we need to update the configuration table to add the corresponding VL identifiers and their associated priority levels.
	
	Hence, in comparison to the legacy AFDX switch architecture, the main modifications required for the proposed extended AFDX switch consists in: (i) at the software level, updating the static configuration table to manage three priority levels instead of two, Note that the update overhead is very limited since only one additional bit per line is necessary; (ii) at the hardware level, adding an extra priority queue at the output port since the legacy AFDX supports already two priorities, and implementing the BLS for the SCT queue on top of the NP-SP scheduler, as illustrated in Fig.\ref{fig:sw_archi}. 
	
	From the global avionic communication architecture point of view, the extended AFDX standard necessitates the update of the End-System at the application layer to enable a consistent mapping between VL identifiers and the appropriate priority level. 

	Moreover, the implementation and certification of this extended AFDX may imply extra costs, in comparison with the legacy one. However, this fact is counterbalanced by the major pros of such an homogeneous architecture, in terms of enhancing performance and reducing cables and weight.
	
	\subsection{BLS Shaper}
	\label{Spec}
	The BLS belongs to the credit-based shapers class. It has been defined in \cite{Gotz2012} by an upper threshold, $L_M$, a lower threshold $L_R$, such as $0\leq L_R < L_M$, and a reserved bandwidth, $BW$. Additionally, the priority of a class $j$ shaped by BLS, denoted $p(j)$, can oscillate between a high and a low value. The low value is usually below the lowest priority of unshaped traffic. In the avionic context, to guarantee the safety isolation level between the different traffic profiles, the low value associated to the SCT is set to be lower than the RC priority level, but higher than the BE's one. Therefore, SCT queue priority oscillates between 0 and 2, RC's priority is 1 and BE has the priority 3. Thus, when SCT traffic is enqueued, RC is the only traffic that can be sent and it happens when the SCT priority is 2. As a consequence, the BE traffic is isolated from the SCT and RC traffics.
	
	The credit counter varies as follows:\\
	(i) initially, the credit counter starts at 0 and the queue of the burst limited flows is high;\\
	(ii) the main feature of the BLS is the change of priority $p(j)$ of the shaped queue, which occurs in two contexts: 1) if $p(j)$ is high and  credit reaches $L_M$;  2) if $p(j)$ is low and credit reaches $L_R$; \\
	(iii) when a frame is transmitted, the credit increases (is consumed) with a rate of $I_{send}$, else the credit decreases (is gained) with a rate of $I_{idle}$;\\
	(iv) when the credit reaches $L_M$, it stays at this level until the end of the transmission of the current frame (if any);	\\
	(v) when the credit reaches $0$ it stays at this level until the end of the transmission of the current frame (if any). The credit remains at 0 until a new BLS frame is transmitted.\\
	\begin{figure}[h]
		\centering
		\includegraphics[width=0.40\textwidth]{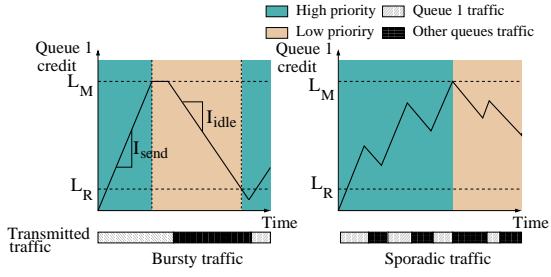}
		\footnotesize \caption{BLS (queue 1) credit evolution}
		\label{fig:BLScredit}
	\end{figure}	
	
	The behaviour of the BLS is illustrated in Fig. \ref{fig:BLScredit}, where the queue 1 is the shaped queue by the BLS. As shown, the credit is always between 0 and $L_M$. The different parameters of the BLS shaper are defined as follows:\\
	(i) the decreasing rate is: $$	I_{idle} = BW\cdot, C$$
	where $C$ is the link speed and $BW$ is the percentage of bandwidth reserved for BLS frames.\\
	(ii) the increasing rate is: $$	I_{send} = C - I_{idle}.$$
	

	Therefore, the implementation of BLS at the hardware level necessitates a counter to keep track of the credit and a timer to handle credit updates. These parameters, i.e., a counter and a timer, induce low extra complexity of implementing a BLS on top of a NP-SP scheduler, in comparison with a regular NP-SP scheduler. It is worth noting that with the BLS, both the priority of the shaped queue and the state of all the queues, i.e., empty or not, define whether the credit is gained or lost. This aspect is depicted in Fig. \ref{fig:BLScredit} for two arrival scenarios. The first one (left figure) shows the case of a bursty traffic, where the maximum of traffic shaped by the BLS is sent when its priority is the highest. Consequently, the other priorities send as much traffic as possible, when the BLS queue priority has the low value. The second one (right figure) is for sporadic traffic, where we can see that when the queue 1 priority is highest but no frame is available, then the credit is regained. However, when the priority is at the low value and the other queues are empty, then frames can be transmitted and the credit is consumed.

	\section{Timing Analysis Methodology}
	\label{TAM}
	We present in this section the worst-case timing analysis methodology based on Network Calculus (NC), and followed to conduct the performance evaluation of our proposed Extended AFDX. We first present the Network Calculus framework and define the necessary and sufficient schedulability conditions. Then, we detail the models of traffic flows, end-systems and switches, which will be extended in the next section to integrate the impact of the BLS on the different traffic classes. Finally, we explain the computation of the upper bounds on end-to-end delays. The main notations used in this paper are presented in Table \ref{notations2}, where upper indices indicate nodes or components and lower indices indicate traffic classes, flows or priority levels.
	

	\begin{table}[h!]
		\footnotesize 
		
		\begin{center}
			\begin{tabularx}{\linewidth} {l X}
				\hline
				$C$ & Link speed\\
				$MFS_k$ & Maximum Frame Size of flow $k$\\
				$BAG_k$ & Bandwidth Allocation Gap of flow $k$\\
				$J_k,Dl_k$ & Jitter and deadline of flow $k$\\				
				$L_M, L_R$ & BLS maximum and resume credit levels\\
				$I_{idle},I_{send}$ & BLS idle and sending slopes \\
				$BW$ & BLS reserved bandwidth\\			
				$p(j)$ & Priority level of a class $j$ with $p(j) \in \{0,1,2,3\} $ \\ 
				$UR^{bn}_{j}$& The bottleneck network utilisation rate of a class $j$ \\
				$n_{j}^{es}$ & Number of flows of class $j$ generated per node $es$\\
				$\beta_{j,k}^{n}$ & Service curve guaranteed to the flow k of class $j$ in a node $n \in \{es, sw\}$ or component $n \in \{bls, sp \}$ in its path\\
				$\beta_{j}^{n}$ & Service curve guaranteed for the traffic class $j$ in a node $n  \in \{es, sw \}$ or component $n \in \{bls, sp \}$\\
				$\beta_{SCT, n}^{sp}$ & Service curve guaranteed to the SCT when having the high priority level (n=0), or the low priority (n=2) in a $sp$ component \\
				$\alpha_{j,k}^{n,i}$ & Input arrival curve of the flow k of class $j$ in the $i^{th}$ node $n  \in \{es, sw \}$ or component $n  \in \{bls, sp \}$ in its path \\
				$\alpha_j^n$ & Input arrival curve of the aggregated flows of class $j$ in a node $n  \in \{es, sw\}$ or component $n  \in \{bls, sp \}$ \\
				$\alpha_{j,k}^{*,n,i}$ & Output arrival curve of the flow k of class $j$ from the $i^{th}$ node $n  \in \{es, sw\}$ or component $n  \in \{bls, sp \}$ in its path\\
				$\alpha_j^{*,n}$ & Output arrival curve of the aggregated flows of class $j$ from a node $n  \in \{es, sw \}$\ or a component $n  \in \{bls, sp \}$\\
				$\Delta_{i}^{j}$ &  $i\in \{send,idle\}$, and $j\in\{max,min\}$ the BLS windows defined in  Eq.(\ref{IidleSCMin}), Eq.(\ref{IidleSCMax}), Eq.(\ref{IsendSCMax}), and Eq.(\ref{IsendSCMax0}) \\
				\hline
			\end{tabularx}
		\end{center}
		\caption{Notations}
		\label{notations2}
	\end{table}

	\subsection{Network Calculus Framework}
	\label{Background}
	The timing analysis detailed in this paper is based on Network Calculus theory \cite{leboudecthiran12} providing upper bounds on delays and backlogs. Delay bounds depend on the traffic arrival described by the so called \textit{arrival curve} $\alpha$, and on the availability of the traversed node described by the so called minimum \textit{service curve} $\beta$. The definitions of these curves are explained as following.
	\begin{Definition}[Arrival Curve]
		\label{def:arrivalCurve}
		\cite{leboudecthiran12} A function $\alpha(t)$ is an arrival curve for a data flow with an input cumulative function $R(t)$,i.e., the number of bits received until time $t$, iff:
		\begin{displaymath}
		\forall t, R(t) \leq  R \otimes\footnote{$f \otimes g (t) = \inf_{0 \leq s \leq t}\{f(t-s) + g(s)\}$} \alpha(t)
		\end{displaymath}
	\end{Definition}
	\begin{Definition}[Strict minimum service curve]
		\label{def:strict-min-service-curve}
		\cite{leboudecthiran12} The function $\beta $ is the minimum \emph{strict} service curve for a data flow with an output cumulative function $R^*$, if for any backlogged period $]s,t]$\footnote{ $]s,t]$ is called backlogged period if $R(\tau) -R^*(\tau) >0, \forall \tau \in ]s,t] $}, $R^*(t) - R^*(s) \geq \beta(t-s)$.
	\end{Definition}
	\begin{Definition}[Maximum service curve]\label{def:max-service-curve}
		\cite{leboudecthiran12} The function $\gamma(t)$ is the maximum  service curve for a data flow with an input cumulative function $R(t)$ and output cumulative function $R^*(t)$ iff:
		\begin{displaymath}
		\forall t,	R^*(t) \leq R \otimes \gamma (t)
		\end{displaymath}
	\end{Definition}
	The traffic contracts are generally enforced using a leaky-bucket shaper, i.e., the traffic flow is $(r,b)$-constrained and the arrival curve is $\alpha (t)=  r\cdot t+b$ for $t >0$. A common model of service curve is the rate-latency curve $\beta_{R,T}$, defined as $ \beta_{R,T}(t)= [R(t-T)]^+ $, where $\left[x\right]^+$ is the maximum between $x$ and $0$. 

	Then, to compute the main performance metrics we need the following results. 
	\begin{Corollary}(Left-over service curve - SP Multiplex)\cite{bouillard2009service}
		\label{cor:residual-service-curve}
		Consider a system with the strict service $\beta$ and $m$ flows crossing it, $f_1$,$f_2$,..,$f_m$. The maximum packet length of $f_i$ is $l_{i,max}$ and $f_i$ is $\alpha_i$-constrained. The flows are scheduled by the non-preemptive static priority (NP-SP) policy, where priority of $f_i > $  priority of $f_j \Leftrightarrow i < j$. For each $i \in \{2,..,m \}$, the strict service curve of $f_i$ is given by\footnote{$g_{\uparrow}(t) = \max\{0,\sup_{0 \leq s \leq t} g(s)\}$}:
		$$ (\beta - \sum_{j <i} \alpha_j - \max_{k \geq i} l_{k,max})_{\uparrow}$$

	\end{Corollary}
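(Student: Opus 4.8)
The plan is to check the strict service curve property of Definition~\ref{def:strict-min-service-curve} directly: I would fix an arbitrary backlogged period $]s,t]$ of $f_i$ and show that the output of $f_i$ grows on it by at least the claimed expression. The underlying idea is a work-conservation accounting: on any interval the bits emitted by the aggregate server split into bits of $f_i$, bits of the higher-priority flows $f_j$ with $j<i$, and bits of the lower-priority flows $f_k$ with $k>i$; isolating $f_i$ amounts to subtracting from a lower bound on the total service an upper bound on the other two contributions.

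Concretely, I would first replace $s$ by the start $s_0\le s$ of the busy period (of the aggregate of the priorities at least as high as $i$, i.e. indices $\le i$) that contains $t$, so that just before $s_0$ all queues of index $\le i$ are empty. Then I would proceed in the following steps. (1) Since $f_i$ is backlogged on $]s,t]$ the whole server is backlogged on $]s_0,t]$, so the strict curve $\beta$ of the aggregate gives a total service of at least $\beta(t-s_0)$. (2) Because the index-$\le i$ queues are empty at $s_0^-$, the higher-priority flows carry no initial backlog into the interval, so the service they consume on $]s_0,t]$ is at most their arrivals, bounded via Definition~\ref{def:arrivalCurve} by $\sum_{j<i}\alpha_j(t-s_0)$. (3) By non-preemption, once a packet of index $\le i$ is present the scheduler never starts a new lower-priority packet; hence on $]s_0,t]$ at most the single lower-priority frame already in service at $s_0$ can be transmitted, contributing at most $\max_{k\ge i} l_{k,max}$ (taking the maximum over $k\ge i$ rather than $k>i$ only enlarges, and thus conservatively keeps valid, this blocking term). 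Subtracting (2) and (3) from (1) leaves $f_i$ with at least $\beta(t-s_0)-\sum_{j<i}\alpha_j(t-s_0)-\max_{k\ge i} l_{k,max}$, and finally I would apply the nondecreasing, nonnegative closure $g_{\uparrow}$ so as to obtain a genuine (monotone, nonnegative) strict service curve, which is exactly the announced expression.

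The hard part will be the bookkeeping around the two interference terms and the interval. The non-preemptive blocking in step (3) is the most delicate point: I must argue rigorously that across the entire period $]s_0,t]$ only one lower-priority frame slips through, which relies precisely on the fact that $f_i$ (and the higher priorities) stay backlogged so that no further low-priority transmission can ever begin. Equally, step (2) is only legitimate because starting at the busy-period boundary $s_0$ removes the higher-priority initial backlog that would otherwise spoil the arrival-curve bound; and the passage from the extended interval $]s_0,t]$ back to the given period $]s,t]$, together with the need to render the residual monotone and nonnegative, is what forces the closure operator $g_{\uparrow}$ and must be handled with care.
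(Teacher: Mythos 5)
The paper itself gives no proof of this corollary (it is imported verbatim from \cite{bouillard2009service}), so your sketch has to be judged against the standard argument for that result. Your skeleton is the right one — extend the backlogged period of $f_i$ backwards to a busy-period start $s_0$, lower-bound the total service by the strict curve $\beta$, upper-bound the higher-priority interference by $\sum_{j<i}\alpha_j$ (valid because the backlog is zero at $s_0$), and upper-bound the lower-priority interference by a single non-preemptive blocking frame. But there is a genuine gap, and it sits exactly at the point you flag as delicate without resolving it. Definition~\ref{def:strict-min-service-curve} requires a lower bound on $R_i^*(t)-R_i^*(s)$ over the backlogged period $]s,t]$ of $f_i$; your steps (1)--(3) produce a lower bound on $R_i^*(t)-R_i^*(s_0)$ over the extended interval $]s_0,t]$. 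Passing from one to the other means subtracting $R_i^*(s)-R_i^*(s_0)$, the service $f_i$ itself received before $s$, and with your choice of $s_0$ — the start of the busy period of the aggregate of indices $\le i$, i.e.\ \emph{including} $f_i$ — this quantity is unbounded: $f_i$ may have had arbitrarily many packets served during $]s_0,s]$ while the higher-priority flows sustained the aggregate busy period until $f_i$'s own backlogged period begins at $s$. The argument as set up therefore cannot be completed. The standard fix is to take $s_0$ as the start of the backlogged period of the \emph{strictly} higher-priority aggregate (indices $<i$ only) containing $s$, or $s_0=s$ if those queues are empty at $s$: then during $]s_0,s]$ some queue of index $<i$ is always nonempty, the non-preemptive scheduler never starts a packet of index $\ge i$, and so $R_i^*(s)-R_i^*(s_0)$ is at most the residue of the one packet of index $\ge i$ already in service at $s_0$. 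This also corrects your reading of the term $\max_{k\ge i} l_{k,max}$: it is not a ``conservative enlargement'' of $\max_{k>i} l_{k,max}$ but structural — the single in-service packet at $s_0$ is either a lower-priority one (pure blocking) or one of $f_i$'s own packets (whose bits are counted before $s$ and must be subtracted), and $\max_{k\ge i} l_{k,max}$ is exactly what absorbs both cases at once.

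The second gap is the closure step, which you present as cosmetic. Since $g_\uparrow \ge g$, having proved ``output $\ge g(\cdot)$'' does not entitle you to ``apply $(\cdot)_\uparrow$'': the claim with $g_\uparrow$ is strictly stronger. The legitimate route is to establish the pointwise bound for \emph{every} backlogged period, then, for a given $]s,t]$ and each $u\le t-s$, apply it to the sub-period $]t-u,t]$ (itself a backlogged period) and use monotonicity of the cumulative output $R_i^*$; the supremum over $u$ then comes for free. Even after that, a subtlety remains which your sketch does not notice: the corrected accounting yields a bound of the form $g(t-s_0)$ with $t-s_0 \ge t-s$, and $g=\beta-\sum_{j<i}\alpha_j-\max_{k\ge i}l_{k,max}$ is not monotone, so $g(t-s_0)\ge g(u)$ for $u\le t-s$ is not automatic; it holds in the cases of interest (rate-latency $\beta$, leaky-bucket $\alpha_j$) because $g$ is then convex, hence decreasing-then-increasing and positive only on its increasing part, and this (or an equivalent quasi-convexity argument) must be said. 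In short, your three bullet points are the easy part of the proof; the two places where the proof actually lives — the choice of $s_0$ that makes the back-conversion to $]s,t]$ possible, and the justification of the $(\cdot)_\uparrow$ closure — are both missing.
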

	\begin{Theorem} [Performance Bounds]\cite{leboudecthiran12}
		\label{PerformanceBounds}
		Consider a flow $F$ constrained by an arrival curve $\alpha$ crossing a system $\mathcal{S}$ that offers a minimum service curve $\beta$ and a maximum service curve $\gamma$. The performance bounds obtained at any time $t$ are:\\
		Backlog\footnote{v: maximal vertical distance} : $ \forall~t:~q(t)\leq v(\alpha,\beta)$\\
		Delay\footnote{h: maximal horizontal distance}: 	$ \forall~t:~d(t)\leq h(\alpha,\beta)$\\
		Tight Output arrival curve\footnote{$f \oslash g(t) = \sup_{s \geq 0}\{f(t+s) - g(s)\}$ }: $\alpha^*(t) =(\gamma\otimes\alpha) \oslash\beta (t)$
	\end{Theorem}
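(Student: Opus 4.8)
The plan is to derive all three bounds directly from the inequalities relating the input and output cumulative functions $R$ and $R^*$ that the three curves provide, manipulated with the $(\min,+)$ operations $\otimes$ and $\oslash$. Before starting I would record one preliminary fact: the \emph{strict} service curve of Definition~\ref{def:strict-min-service-curve} implies the ordinary convolution form $R^*\ge R\otimes\beta$. Indeed, letting $s$ be the start of the backlogged period ending at $t$ (so $R^*(s)=R(s)$) gives $R^*(t)\ge R^*(s)+\beta(t-s)=R(s)+\beta(t-s)\ge (R\otimes\beta)(t)$. This is the service inequality I would use throughout, alongside the arrival-curve inequality, which in its sliding-window form reads $R(t)-R(t-s)\le\alpha(s)$, and the maximum-service inequality $R^*\le R\otimes\gamma$.

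For the backlog bound I would write $q(t)=R(t)-R^*(t)$ and substitute $R^*(t)\ge(R\otimes\beta)(t)=\inf_{0\le s\le t}\{R(t-s)+\beta(s)\}$. Subtracting an infimum turns it into a supremum, giving $q(t)\le\sup_{0\le s\le t}\{R(t)-R(t-s)-\beta(s)\}$, and the arrival-curve inequality then bounds the increment $R(t)-R(t-s)$ by $\alpha(s)$, so $q(t)\le\sup_{s\ge0}\{\alpha(s)-\beta(s)\}=v(\alpha,\beta)$, the maximal vertical distance.

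For the delay bound I would fix $t$, set $\bar d=h(\alpha,\beta)$, and aim to show $R(t)\le R^*(t+\bar d)$, which forces $d(t)\le\bar d$ by the definition of virtual delay. Expanding $R^*(t+\bar d)\ge\inf_{0\le s\le t+\bar d}\{R(t+\bar d-s)+\beta(s)\}$, I would bound every term of the infimum below by $R(t)$ via a case split at $s=\bar d$: when $s<\bar d$ the argument $t+\bar d-s>t$ makes $R(t+\bar d-s)\ge R(t)$ by monotonicity; when $s\ge\bar d$, writing $u=s-\bar d$ and using the arrival curve gives $R(t-u)\ge R(t)-\alpha(u)$, while the definition of $h$ guarantees $\alpha(u)\le\beta(u+\bar d)$, so the term is again at least $R(t)$.

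The output arrival curve is the step I expect to be the main obstacle, since it is the only one that must combine all three curves at once and reorganise nested infima into a single deconvolution. I would bound $R^*(t+\tau)-R^*(t)$ for arbitrary $t,\tau\ge0$: the maximum service curve yields $R^*(t+\tau)\le R(t+\tau-a)+\gamma(a)$ for every $a$, and the minimum service curve yields $R^*(t)\ge R(t-b^*)+\beta(b^*)$ at the argument $b^*$ attaining (or approaching) $(R\otimes\beta)(t)$. Applying the arrival curve to the difference $R(t+\tau-a)-R(t-b^*)\le\alpha(\tau+b^*-a)$, then minimising the free parameter $a$, collapses $\inf_a\{\alpha(\tau+b^*-a)+\gamma(a)\}$ into $(\gamma\otimes\alpha)(\tau+b^*)$ by commutativity of $\otimes$; taking the supremum over $b^*$ produces $((\gamma\otimes\alpha)\oslash\beta)(\tau)=\alpha^*(\tau)$, establishing $\alpha^*$ as an arrival curve for $R^*$. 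The delicate bookkeeping here is keeping the parameter ranges valid (e.g. $a\le\tau+b^*$ so the arrival-curve argument stays non-negative) and justifying that pairing the per-$a$ max-service inequality with the fixed minimiser $b^*$ is legitimate precisely because one bound holds for all shifts while the other is attained.
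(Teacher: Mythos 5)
This theorem is imported by the paper from \cite{leboudecthiran12} and stated without proof, so there is no in-paper argument to compare against; your derivation is, in substance, the standard proof from that reference (strict service curve $\Rightarrow$ $R^*\ge R\otimes\beta$, then vertical-distance, horizontal-distance, and deconvolution arguments), and it is correct. Two minor technical points: in the delay bound, the definition of $h(\alpha,\beta)$ as a supremum of infima only guarantees $\alpha(u)\le\beta(u+\bar d+\epsilon)$ for every $\epsilon>0$ (the inner infimum need not be attained), so the clean conclusion is $d(t)\le\bar d+\epsilon$ for all $\epsilon>0$, hence $d(t)\le\bar d$; and your final step establishes that $(\gamma\otimes\alpha)\oslash\beta$ is a \emph{valid} arrival curve for the output, which is all the paper ever uses, but it does not establish the word ``tight'' in the statement---tightness would require exhibiting a worst-case trajectory achieving the bound, and is a separate (and here unnecessary) argument.
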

	\begin{Theorem} (Concatenation-Pay Bursts Only Once)\cite{bouillard2009service}
		\label{ConcatenationOfNodes}
		Assume a flow crossing two servers with respective service curves  $\beta_1$ and $\beta_2$. The system composed of the concatenation of the two servers offers a minimum service curve $\beta_1 \otimes \beta_2$.
	\end{Theorem}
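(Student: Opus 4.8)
The plan is to work directly with the cumulative arrival and departure functions at the three observation points of the tandem, and to reduce the claim to two elementary algebraic properties of min-plus convolution. I write $R$ for the input to the first server, $R'$ for the flow leaving the first server (equivalently, entering the second), and $R^*$ for the output of the second server. By the defining inequality of a minimum service curve applied to each server in isolation, I have $R' \geq R \otimes \beta_1$ at the first server and $R^* \geq R' \otimes \beta_2$ at the second. The goal is to combine these into $R^* \geq R \otimes (\beta_1 \otimes \beta_2)$, since that inequality is precisely the assertion that $\beta_1 \otimes \beta_2$ is a minimum service curve for the concatenated system.

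The core of the argument is then a two-line chain of inequalities. First I would substitute the first bound into the second: because $R' \geq R \otimes \beta_1$, convolving both sides with $\beta_2$ preserves the order, which yields $R^* \geq R' \otimes \beta_2 \geq (R \otimes \beta_1) \otimes \beta_2$. This substitution relies on $\otimes$ being isotone in its first argument, a fact immediate from the definition $f \otimes g(t) = \inf_{0 \leq s \leq t}\{f(t-s)+g(s)\}$, since enlarging $f$ can only increase each term inside the infimum. Second, I would invoke associativity of min-plus convolution to regroup, obtaining $(R \otimes \beta_1) \otimes \beta_2 = R \otimes (\beta_1 \otimes \beta_2)$. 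Chaining the two steps gives $R^* \geq R \otimes (\beta_1 \otimes \beta_2)$, completing the proof.

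The main delicacy is not the chain itself but the justification of associativity of $\otimes$ over the class of wide-sense increasing functions, which I would either cite from the min-plus algebra in \cite{leboudecthiran12} or verify by a short double-infimum interchange. A second subtlety concerns which notion of service curve is used: the argument above closes cleanly for the simple min-plus service relation $R^* \geq R \otimes \beta$, so I would carry it out in that form. If instead one insists on the \emph{strict} service curves of Definition \ref{def:strict-min-service-curve}, an additional step identifying the relevant backlogged periods at the intermediate node would be required, and the clean associativity argument would no longer apply verbatim; I therefore expect the strict-versus-simple distinction to be the point demanding the most care.
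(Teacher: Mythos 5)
Your proof is correct, and it is the canonical argument: the paper itself gives no proof of this theorem, importing it by citation from \cite{bouillard2009service}, and what you wrote (isotonicity of $\otimes$ in its first argument to substitute $R' \geq R \otimes \beta_1$ into $R^* \geq R' \otimes \beta_2$, then associativity to regroup as $R \otimes (\beta_1 \otimes \beta_2)$) is exactly the standard proof found in the cited literature. Your closing remark about the strict-versus-simple distinction is also the right point to flag: the paper's Definition~\ref{def:strict-min-service-curve} is strict, but a strict service curve is a fortiori a simple one ($R^* \geq R \otimes \beta$), so the hypotheses can be weakened to the simple form, your chain of inequalities applies verbatim, and the conclusion of the theorem --- which asserts only a \emph{minimum} service curve for the tandem, not a strict one --- is exactly what your argument delivers. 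No gap.
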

		Finally, to compute end-to-end delay bounds of individual traffic flows, we need the following corollary.
		
	\begin{Corollary}
		\label{thm:residual-service-curve}
		(Individual flow residual service curve - Blind Multiplex) \cite{bouillard2009service} Consider $m$ flows $f_1$,$f_2$,..,$f_m$ of class $j$ crossing a system $n$ with the strict service $\beta_ {j}^{n}$ offered to class $j$ and with $f_k$ $\alpha_k^{sp}$-constrained. The maximum packet length of $f_k$ is $l_{k,max}$. Then the residual service curve offered to $f_k$ in a node $n$ in its path is:
		\begin{displaymath}
		\beta_{j,k}^{n}(t)= (\beta_ {j}^{n}(t) -  \sum_{l\neq k} \alpha_l^{sp}(t)-\max_{l \neq k} l_{l,max})_\uparrow
		\end{displaymath}
	
	\end{Corollary}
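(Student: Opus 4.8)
The plan is to recognise this statement as a direct instance of the blind (arbitrary) multiplexing residual service theorem of \cite{bouillard2009service}, applied not to a whole scheduler but to the set of class-$j$ flows that share the strict service curve $\beta_j^n$. The starting point is therefore the hypothesis that $\beta_j^n$ is a \emph{strict} service curve for the class-$j$ aggregate, together with the fact that nothing is assumed about the order in which the individual flows $f_1,\dots,f_m$ of class $j$ are served inside the node (hence ``blind''), so that every competing flow must be treated as a potential interferer.

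First I would fix an observation time $t$ and let $R_k$, $R_k^*$ denote the cumulative input and output of $f_k$, and $R=\sum_l R_l$, $R^*=\sum_l R_l^*$ the class-$j$ aggregates. I would then select the interval on which to apply the strict service curve as the backlogged period $]s,t]$ of the \emph{aggregate} ending at $t$, at whose left endpoint $R_l(s)=R_l^*(s)$ holds simultaneously for every flow $l$ of class $j$. The key elementary remark is that whenever $f_k$ is backlogged the aggregate is backlogged as well, since $R-R^*=\sum_l (R_l-R_l^*)\ge R_k-R_k^*>0$; this is what lets the class-level strict service curve be transferred into a bound usable for $f_k$.

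Next, on $]s,t]$ the strict service property of Definition \ref{def:strict-min-service-curve} gives $R^*(t)-R^*(s)\ge \beta_j^n(t-s)$, and I would split the aggregate output as $R_k^*(t)-R_k^*(s)=\big(R^*(t)-R^*(s)\big)-\sum_{l\neq k}\big(R_l^*(t)-R_l^*(s)\big)$. Since $R_l^*(s)=R_l(s)$ at the chosen origin, the service consumed by each competing flow obeys $R_l^*(t)-R_l^*(s)\le R_l(t)-R_l(s)\le \alpha_l^{sp}(t-s)$ by Definition \ref{def:arrivalCurve}, which produces the interference term $\sum_{l\neq k}\alpha_l^{sp}$. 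The non-preemptive correction $\max_{l\neq k} l_{l,max}$ is then introduced to account for the single packet of some other class-$j$ flow that may already be in transmission when $f_k$ reaches the head of its queue and which, under non-preemption, cannot be interrupted; at most one such packet, of length at most $\max_{l\neq k} l_{l,max}$, can delay $f_k$ over the relevant busy interval. Combining these bounds, discarding negative values, and applying the non-decreasing closure $(\cdot)_\uparrow$ to turn the resulting lower bound into an admissible wide-sense increasing service curve yields exactly $\beta_{j,k}^n=\big(\beta_j^n-\sum_{l\neq k}\alpha_l^{sp}-\max_{l\neq k} l_{l,max}\big)_\uparrow$.

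The step I expect to be the main obstacle is the rigorous treatment of the non-preemptive blocking term rather than the fluid interference term: one must argue that, despite the unknown intra-class scheduling, no more than one maximal competing packet can be charged against $f_k$ on the relevant interval, so that it is captured by a single constant $\max_{l\neq k} l_{l,max}$ inside the closure rather than by a per-flow sum. Reconciling the packet-level non-preemption with the continuous strict-service argument — i.e. choosing the backlog origin so that the in-service packet is accounted for exactly once — is the delicate point; the remaining manipulations (the output-versus-input inequality and the closure operation) are routine given the two definitions cited above.
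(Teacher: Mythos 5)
The paper never proves this corollary: it is imported as a background result, with the proof residing in \cite{bouillard2009service}, so there is no in-paper argument to compare yours against. Judged on its own merits, your proof is essentially the standard blind-multiplexing argument and is sound: choosing $s$ as the start of the aggregate backlogged period containing $t$ (so that $R_l(s)=R_l^*(s)$ for every $l$), applying Definition~\ref{def:strict-min-service-curve} on $]s,t]$, and bounding each interferer's departures by $R_l^*(t)-R_l^*(s)\le R_l(t)-R_l(s)\le \alpha_l^{sp}(t-s)$ yields $R_k^*(t)-R_k(s)\ge \beta_j^n(t-s)-\sum_{l\neq k}\alpha_l^{sp}(t-s)$. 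Two corrections, though. First, the step you flag as the ``main obstacle'' is a non-issue: in a scheduler-agnostic (blind) argument the term $\max_{l\neq k}l_{l,max}$ is neither needed nor attributable to any specific packet, because at $s$ the aggregate backlog is zero, so every bit of a competing flow served during $]s,t]$ arrived after $s$ and is already counted inside $\alpha_l^{sp}(t-s)$; there is no in-service packet to account for. The statement of the corollary is simply conservative, and since subtracting a nonnegative constant only lowers the curve, and any function pointwise below a service curve is itself a service curve, the bound you actually derived (which is stronger) implies the stated one. Presenting the packet term as a necessary non-preemption charge is conceptually wrong, even if harmless to the final inequality. Second, the closure $(\cdot)_\uparrow$ is not a cosmetic admissibility step: $g_\uparrow\ge g^+$ pointwise, so invoking it is a strictly stronger claim than the per-$t$ inequality you established. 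It is justified by observing that each $]s,s+u]$ with $0\le u\le t-s$ is also a backlogged period, applying your bound at $s+u$, and using that $R_k^*$ is non-decreasing to carry the best intermediate bound forward to $t$; that one line should be added to make the argument complete.
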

	
	\subsection{Schedulability Conditions}
	\label{scheduTest}
	To infer the real-time guarantees of our proposed solution, we need first to define a necessary schedulability condition. The latter consists in respecting the stability condition within the network, where the maximum arrival rate of an input traffic at any crossed node has to be lower than the minimum guaranteed service rate within the node. This constraint is denoted as \textbf{\textit{rate constraint}}. Then, we define a sufficient schedulability condition to infer the traffic schedulability, which consists in comparing the upper bound on end-to-end delay of each traffic flow to its deadline. This constraint is called \textbf{\textit{deadline constraint}}.
	
	For this sufficient schedulability condition, we detail the end-to-end delay expression of a flow $k$ in the class $j\in \{SCT,RC,BE\}$, $EED_{j,k}$, along its path $path_k$ as follows:
	\begin{equation}\label{Dend-2-end}
	EED_{j,k}= d_{j,k}^{es}+ d_{prop}+\sum\limits_{sw_i\in path_k}d_{j,k}^{sw_i}
	\end{equation}
	
	With $d_{j,k}^{es}$ the delay within the end-system $es$ to transmit the flow $k$ of class $j$ and $d_{prop}$ the propagation delay along the path, which is generally negligible in an avionics network. The last delay $d_{j,k}^{sw_i}$ represents the upper bound on the delay within each switch $sw_i$ along the flow path, and it consists of several parts: (i) the store and forward delay at the input port, equal to $\frac{L}{C}$, with $L$ the length of the frame and $C$ the capacity; (ii) the technological latency due to the switching process, upper-bounded by 1$\mu$s; (iii) the output port delay due to the BLS and NP-SP scheduler. Hence, the only two unknown are the delays in the end-system and the output port of the switch. To enable the computation of upper bounds on these delays, we need to model the different parts of the network, and more particularly the BLS.
	
	\subsection{System Modeling}
	\label{model}
	To compute upper bounds on end-to-end delays of different traffic classes using Network Calculus, we need to model each message flow to compute its maximum arrival curve, and the behaviour of end-systems and the crossed switches to compute the minimum service curves.
	
	The characteristics of each traffic flow $k$ of class $j\in \{SCT,RC,BE\}$, generated by an end-system, is characterised by $\left( BAG_{k}, MFS_{k}, Dl_{k}, J_{k}\right) $ for respectively the minimum inter-arrival time, the maximum frame size integrating the protocol overhead, the deadline $Dl_{k}$ if any (generally equal to $BAG_{k}$ unless explicitly specified and infinite for BE) and the jitter.
	
	The arrival curve of traffic class $j$ in the end-system $es$, based on a leaky bucket model, is as follows:	
	$$\alpha_{j}^{es}(t) = \sum\limits_{k \in j}\alpha_{j,k}^{es}(t)= \sum\limits_{k \in j}MFS_{k}+\frac{MFS_{k}}{BAG_{k}}\left( t+J_{k}\right)$$	
	\[ 
	\alpha_{j}^{es}(t) = b_j+r_jt \text{ with }
	\begin{cases}
	b_j=\sum\limits_{k \in j}MFS_{k}+\frac{MFS_{k} }{BAG_{k}}J_{k}\\
	r_j=\sum\limits_{k \in j}\frac{MFS_{k} }{BAG_{k}}
	\end{cases}
	\]	
	For the end-systems, they are implementing a Non-Preemptive Static Priority Scheduler (NP-SP). This scheduler has been already modelled in the literature \cite{bouillard2009service} through Cor. \ref{cor:residual-service-curve}, and the defined strict minimum service curve guaranteed to a traffic class $j\in \{SCT,RC,BE\}$ within an end-system $es$ is as follows:\\	
	$\beta_{j}^{es}(t) =\left[ C\cdot t- \sum\limits_{ k \in i, p(i) < p(j)}\alpha_{i,k}^{es}(t) - \max\limits_{k \in i, p(i) \geq p(j)} MFS_k \right]_ \uparrow $
	
	For the proposed Extended AFDX switches, we need first to model the impact on the SCT of the BLS implemented on top of the NP-SP scheduler. To achieve this aim, we distinguish two possible scenarios. The first one covers the particular case where the priority of SCT remained low (2), i.e., the other queues are empty; whereas the second one covers the general case where the priority of SCT oscillates between low (2) and high (0), as explained in Section \ref{Spec}. The minimum service curve guaranteed within the switch in the first scenario is due to the NP-SP scheduler and denoted $\beta^{sp}_{SCT, 2}$, which is computed via Cor. \ref{cor:residual-service-curve} when considering the impact of RC traffic as the highest priority and the BE as the lowest priority. On the other hand, the minimum service curve guaranteed within the switch in the second scenario is computed via Th. \ref{ConcatenationOfNodes}, through the concatenation of the service curves within the BLS $\beta_{SCT}^{bls}$ (computed in Section \ref{min-SCT}) and the NP-SP $\beta_{SCT, 0}^{sp}$ (computed via Cor. \ref{cor:residual-service-curve}). Therefore, we define the following relations between the service curves guaranteed within the switch $sw$ and the components $\{bls, sp\}$ for the traffic classes $\{SCT, RC \} $, $\beta_{SCT}^{sw}$ and $\beta_{RC}^{sw}$, respectively:
	\begin{eqnarray}
	&\beta_{SCT}^{sw} (t)=\max(\beta_{SCT, 2}^{sp},\beta_{SCT, 0}^{sp}\otimes\beta_{SCT}^{bls} (t))\label{eq:swSCT}\\
	&\beta_{RC}^{sw}(t) =\left[ C\cdot t- \alpha_{SCT}^{*,bls}(t) - \max\limits_{k \in i, p(i) \geq p(RC) } MFS_k \right ]_ \uparrow\nonumber
	\end{eqnarray}
	
	In Section \ref{SC-NC}, we will detail the minimum service curve guaranteed within the BLS $\beta_{SCT}^{bls}$ and the maximal output arrival rate $\alpha^{*,bls}_{SCT}$ depending on the respective maximum service curve $\gamma_{SCT}^{bls}$.

	\subsection{Computing End-to-End Delays}
	The computation of the end-to-end delay upper bounds follows four main steps:
	
	(1) Computing the strict minimum service curve guaranteed to each traffic class $j\in \{SCT,RC,BE\}$ in each node $n$ of type $n\in\{es,sw\}$, $\beta_{j}^{n}$, will infer the computation of the residual service curve, guaranteed to each individual flow $k$ of class $j$, $\beta_{j,k}^{n}$ using Cor. \ref{thm:residual-service-curve};
	                                                                                                                                                                                                                                                                                                    
	(2) Knowing the residual service curve guaranteed to each flow within each crossed node allows the propagation of the arrival curves along the flow path, using Th.\ref{PerformanceBounds}. We can compute the output arrival curve, based on the input arrival curve and the minimum service curve, which will be in its turn the input of the next node;
	
	(3) The computation of the minimum end-to-end service curve of each flow $k$ in class $j$, based on Th.\ref{ConcatenationOfNodes}, is simply the concatenation of the residual service curves, $\beta_{j,k}^{n}$, $\forall n$ along its path $path_k$.
	
	(4) Given the minimum end-to-end service curve of each flow $k$ in class $j$ along its $path_k$ and its maximum arrival curve at the initial source, the end-to-end delay upper bound $D_{j,k}$ is the maximum horizontal distance between the two curves (see Th.\ref{PerformanceBounds}).
	
	Hence, as we can notice, we need to model all the unknown service curves, related to the BLS, to enable the end-to-end delay upper bounds computation. These curves are detailed in the next section. It is worth noting that since the BE class has no deadline, the service curves guaranteed to this class and the computation of the respective upper bounds on end-to-end delays are not detailed in this paper.

	\section{Service Curves using NC}
	\label{SC-NC}
	To compute the service curves offered to the burst limited traffic flows, i.e., SCT, and the non burst limited traffic flows, i.e., RC and BE, we need to detailed two types of windows, which are enforced by the BLS behaviour.  The first one is denoted as \textit{sending window}, during which the SCT has the highest priority and is sent until the consumed credit reaches the maximum threshold, $L_M$. The second one is called \textit{idle window} where the SCT has the priority just higher than BE and the consumed credit is decreasing until reaching the minimum threshold, $L_R$. Moreover, due to the non-preemptive message transmission, both windows have minimal and maximal durations. 
	The various combinations of such durations will induce the service curves, which are necessary for computing upper bounds on end-to-end delays and detailed in this section. 
	
	\subsection{Strict Minimum Service Curve of SCT}
	\label{min-SCT}
	The strict minimum service curve of SCT defines a lower bound on the SCT output cumulative traffic from the BLS. This curve represents the most deteriorated behaviour of BLS, in terms of offered service to the SCT, which maximizes its delay within the BLS. Hence, to cover this worst-case behaviour, we combine the maximum \textit{idle window}  and the minimum \textit{sending window} durations.
	
	The minimum \textit{sending window} duration, $\Delta_{send}^{min}$, is the time for the consumed credit to go from the lowest to the highest thresholds, i.e., from $L_R$ to $L_M$, with an increasing slope $I_{send}$: 
	\begin{equation}
	\label{IsendSCMin}
	\Delta_{send}^{min} = \frac{L_M-L_R}{I_{send}}
	\end{equation}
	
	The maximum \textit{idle window} duration, $\Delta_{idle}^{max} $, is the time for the consumed credit to go from $L_M$ to $L_R$ with a decreasing slope $I_{idle}$, in addition to the transmission time of a maximum frame of the RC traffic. The latter is due to the non-preemption feature when a RC frame is starting its transmission just before the consumed credit reaches the lowest threshold, $L_R$. It is worth noting that the BE class impacts the SCT only within the NP-SP scheduler and not within the BLS since it has a priority (3) lower than the lowest priority of SCT (2):
	\begin{equation}
	\label{IidleSCMax}
	\Delta_{idle}^{max} = \frac{L_M-L_R}{I_{idle}} +\frac{MFS_{RC}}{C}
	\end{equation}
	
	Therefore, the strict minimum service curve guaranteed to the SCT, $\beta_{SCT}^{bls}$ is defined in Th. \ref{Th:SCT}.
	\begin{Theorem}[Strict Minimum Service Curve of SCT in BLS] 
		\label{Th:SCT}
		Consider a SCT crossing a server with a constant rate $C$, implementing a BLS shaper. The strict minimum service curve guaranteed to the SCT is as follows:
		\begin{equation}
		\label{betaRateLatencySCT}
		\beta_{SCT}^{bls}(t) =\frac{\Delta^{min}_{send}}{\Delta_{send}^{min}+\Delta_{idle}^{max}}  \cdot C \cdot \left( t-\Delta_{idle}^{max}\right)^+ 
		\end{equation}
		where $\left[x\right]^+$ is the maximum between $x$ and $0$.
	\end{Theorem}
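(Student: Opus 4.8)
The plan is to verify that the rate-latency curve in Eq.(\ref{betaRateLatencySCT}) satisfies Definition \ref{def:strict-min-service-curve} directly, by analysing the credit trajectory over an arbitrary SCT-backlogged period $]s,t]$. First I would observe that throughout such a period the SCT queue is never empty, so the rule freezing the credit at $0$ is never triggered and the credit stays in $[L_R,L_M]$, oscillating according to the priority-switching rules. Consequently the period decomposes into an alternating succession of \emph{idle windows} (SCT priority low, credit decreasing at $I_{idle}$) and \emph{sending windows} (SCT priority high, credit increasing at $I_{send}$). Because the scheduler is work-conserving and SCT has the top priority inside a sending window, SCT is served at the full rate $C$ for the whole sending window; conversely, inside an idle window RC outranks SCT so the BLS delivers no SCT service. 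Thus the cumulative SCT service $R^*$ grows at slope $C$ on sending windows and is flat on idle windows.

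Second, I would bound the two window lengths uniformly. A sending window carries the credit from $L_R$ up to $L_M$ at slope $I_{send}$ with SCT transmitting continuously, so its length equals $\Delta_{send}^{min}=(L_M-L_R)/I_{send}$ as in Eq.(\ref{IsendSCMin}); this is the least on-time attainable per cycle. An idle window carries the credit from $L_M$ down to $L_R$ at slope $I_{idle}$, taking $(L_M-L_R)/I_{idle}$, to which at most one maximal RC frame transmission $MFS_{RC}/C$ must be added because of non-preemption at the instant the priority flips back to high; hence the off-time per cycle is at most $\Delta_{idle}^{max}$ of Eq.(\ref{IidleSCMax}). The BE class is irrelevant here since its priority $3$ lies below the low SCT priority $2$. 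Therefore, over one full cycle, SCT receives at least $C\cdot\Delta_{send}^{min}$ bits of service in an interval of length at most $\Delta_{send}^{min}+\Delta_{idle}^{max}$.

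Third, I would assemble these per-cycle guarantees into the global bound. The worst case for the flow is that the backlogged period opens at the very start of an idle window, so that up to $\Delta_{idle}^{max}$ elapses before any SCT service is delivered, which fixes the latency term $T=\Delta_{idle}^{max}$. Chaining the cycles, $R^*(t)-R^*(s)$ is lower-bounded by the staircase that stays at $0$ on $[0,\Delta_{idle}^{max}]$ and then gains $C\,\Delta_{send}^{min}$ bits every $\Delta_{send}^{min}+\Delta_{idle}^{max}$ time units. A short geometric argument then shows that the rate-latency line
\[
\frac{\Delta_{send}^{min}}{\Delta_{send}^{min}+\Delta_{idle}^{max}}\cdot C\cdot\left(t-s-\Delta_{idle}^{max}\right)^+
\]
passes exactly through the lower corners of this staircase and lies below it everywhere, and this line is precisely $\beta_{SCT}^{bls}(t-s)$. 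Since $R^*(t)-R^*(s)\ge\beta_{SCT}^{bls}(t-s)$ then holds for every backlogged $]s,t]$, the curve of Eq.(\ref{betaRateLatencySCT}) is a strict minimum service curve, whose rate $\tfrac{C\,\Delta_{send}^{min}}{\Delta_{send}^{min}+\Delta_{idle}^{max}}$ is the average cycle throughput and whose latency is $\Delta_{idle}^{max}$.

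I expect the main obstacle to be the rigorous and uniform justification of the window-length bounds over \emph{all} admissible credit trajectories and backlogged periods: in particular, proving that the non-preemption penalty never exceeds a single $MFS_{RC}/C$ per idle window, and that no trajectory can simultaneously shorten a sending window below $\Delta_{send}^{min}$ and lengthen an idle window beyond $\Delta_{idle}^{max}$. Care is also needed at the endpoints of $]s,t]$, where the period may begin or end in the middle of a window; handling these partial windows (and the edge case $L_R=0$) without weakening the bound is the delicate bookkeeping step, whereas the concluding staircase-versus-line comparison is routine.
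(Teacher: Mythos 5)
You take a genuinely different route from the paper---an explicit decomposition of the backlogged period into alternating windows followed by a staircase-versus-line comparison, instead of the paper's counting argument---but there is a genuine gap, and it sits exactly where you defer the difficulty to ``delicate bookkeeping''. Your decomposition premise is that every idle window lasts at most $\Delta_{idle}^{max}$ and delivers no SCT service, while every sending window is served at full rate $C$. For the BLS this premise is false: a low-priority phase ends only when the credit has descended to $L_R$, and the credit descends only while no SCT frame is in transmission. If the RC and BE queues empty during such a phase, the NP-SP scheduler serves SCT at its \emph{low} priority (2 still beats BE's 3), the credit rises again at $I_{send}$, and the phase can exceed $\Delta_{idle}^{max}$ by an arbitrary amount. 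This is not a boundary nuisance that can be absorbed, because it is itself a tight worst case: take $L_R=0$, neglect the $MFS_{RC}/C$ term in Eq.~(\ref{IidleSCMax}), and consider a backlogged period consisting of a single low-priority phase in which SCT is served for a total time $t_s$. The credit balance forces the non-serving time to be $(L_M+I_{send}\,t_s)/I_{idle}$, so the phase has duration $D=t_s+(L_M+I_{send}\,t_s)/I_{idle}$ and delivers $C\,t_s$ bits; using $I_{send}+I_{idle}=C$ one checks that $C\,t_s=\frac{\Delta_{send}^{min}}{\Delta_{send}^{min}+\Delta_{idle}^{max}}\cdot C\cdot\left(D-\Delta_{idle}^{max}\right)$ exactly. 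So there exist admissible trajectories, meeting the claimed curve with equality, that your staircase does not dominate; the uniform bound you hope to prove (``no trajectory can simultaneously shorten a sending window and lengthen an idle window'') does not hold---what holds is only a \emph{coupled} statement, in which a phase may be longer precisely in proportion to the SCT service delivered inside it.

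The paper's proof is organized so as never to need such uniform per-window bounds. It does not construct trajectories: it lets $p$ be the number of full-service opportunities of SCT, writes $\Delta R^*_{SCT}(\delta)\geq p\cdot C\cdot\Delta_{send}^{min}$ and $\Delta R^*_{RC}(\delta)\leq (p+1)\cdot C\cdot\Delta_{idle}^{max}$, and then invokes work conservation of the rate-$C$ server over a backlogged period, $C\cdot\delta\leq \Delta R^*_{SCT}(\delta)+\Delta R^*_{RC}(\delta)$, to eliminate $p$ by pure algebra. The work-conservation inequality is what sidesteps your problem: any service SCT receives inside a low-priority phase is automatically credited in $\Delta R^*_{SCT}(\delta)$ rather than having to be excluded by a duration bound. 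To repair your argument while keeping its constructive flavour, you would need to replace ``idle windows last at most $\Delta_{idle}^{max}$'' by a credit-balance (potential-function) lemma---over any subinterval of a backlogged period, $I_{idle}\cdot(\text{non-serving time})-I_{send}\cdot(\text{serving time})$ is bounded by $L_M-L_R$ plus the non-preemption term---and redo the staircase comparison with that coupled bound. The parts of your proof that do hold up, namely the per-cycle rate estimate $C\,b/(a+b)\geq \frac{C\,\Delta_{send}^{min}}{\Delta_{send}^{min}+\Delta_{idle}^{max}}$ for $a\leq\Delta_{idle}^{max}$, $b\geq\Delta_{send}^{min}$, and the corner-touching comparison of the staircase with the rate-latency line, then go through; but the missing lemma is, in substance, the paper's argument in disguise.
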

	
	\begin{proof} 
		\label{proof1}
		Consider $R^*_{SCT} (t)$ the output cumulative function of the SCT at the output of the server implementing a BLS, and $\Delta R^*_{SCT}(\delta)$ the variation of the output cumulative function during $\delta$. To prove that $\beta_{SCT}^{bls}$ in Eq. (\ref{betaRateLatencySCT}) is a strict minimum service curve, we need to prove Def. \ref{def:strict-min-service-curve} for any backlogged period $\delta$, i.e., the SCT flows are continuously backlogged during $\delta$. 
		
		During a backlogged period $\delta$, the SCT has at least $p$ opportunities of full service constrained by $\beta (t) = C\cdot t$ during the minimum sending window $\Delta_{send}^{min}$, then:
		\begin{equation}
		\Delta R^*_{SCT} (\delta)\geq p\cdot C\cdot\Delta_{send}^{min}
		\label{min-src-sct}
		\end{equation}
		
		The main idea is to find a lower bound of $p$ to define the service curve guaranteed to SCT, $\beta_{SCT}^{bls}$. On the other hand, if SCT has $p$ opportunities to be transmitted, then the RC traffic (since it is the only traffic class with a priority higher than the lowest priority of SCT during the idle window) has at most $(p+1)$ opportunities to be transmitted during at the worst-case the maximum idle window, $\Delta_{idle}^{max}$, then:
		\begin{equation}
		\Delta R^*_{RC} (\delta) \leq (p+1) \cdot C\cdot\Delta_{idle}^{max}
		\label{max-src-rc}
		\end{equation}
		
		Giving the strict service curve property of $C\cdot t$ since we have a constant rate server and using Eq. (\ref{max-src-rc}), we have: 
		\begin{eqnarray}
		C\cdot\delta &\leq & \Delta R^*_{SCT}(\delta) +\Delta R^*_{RC}(\delta) \nonumber \\
		&  \leq & \Delta R^*_{SCT}(\delta)+(p+1)\cdot C\cdot\Delta_{idle}^{max}\nonumber
		\end{eqnarray}
		
		Consequently, the lower bound of $p$ is as follows:
		\begin{eqnarray}
		\label{LBP}
		p&\geq & \frac{C\cdot\delta-\Delta R^*_{SCT}(\delta)}{C\cdot\Delta_{idle}^{max}}-1
		\end{eqnarray}
		
		When injecting Eq.(\ref{LBP}) in Eq. (\ref{min-src-sct}), we obtain:
		\begin{eqnarray}
		&& \Delta R^*_{SCT}(\delta) \geq  \left(\frac{C\cdot\delta-\Delta R^*_{SCT}(\delta)}{C\cdot\Delta_{idle}^{max}}-1 \right)\cdot C\cdot\Delta_{send}^{min}\nonumber\\
		&& \Delta R^*_{SCT}(\delta)\cdot\left( 1+ \frac{\Delta_{send}^{min}}{\Delta_{idle}^{max}}\right) \geq  \left(\frac{ \delta}{\Delta_{idle}^{max}}-1\right)\cdot C\cdot	\Delta_{send}^{min}	\nonumber\\
		&& \Delta R^*_{SCT}(\delta) \geq  \frac{\frac{ \delta}{\Delta_{idle}^{max}}-1}{ \frac{\Delta_{send}^{min}}{\Delta_{idle}^{max}}+1}\cdot	C\cdot\Delta_{send}^{min}\nonumber
		\end{eqnarray}
		
		Giving that $\Delta R^*_{SCT}(\delta) \geq 0$, then:
		\[ \Delta R^*_{SCT}(\delta) \geq  \frac{\Delta^{min}_{send}}{\Delta_{send}^{min}+\Delta_{idle}^{max}}  \cdot C \cdot \left( \delta -\Delta_{idle}^{max}\right)^+  \]	
	\end{proof}

	\subsection{Maximum Service Curve of SCT}
	The maximum service curve of SCT represents the best offered service to the SCT, which induces the minimum processing delay within the BLS. As such, in the presence of RC traffic, we combine the minimum \textit{idle window} duration and the maximum \textit{sending window} one to handle this best-case behaviour. 
	
	The maximum \textit{sending window} duration, $\Delta_{send}^{max}$, is equal to the sum of : (i) the minimum \textit{sending window} duration, $\Delta_{send}^{min}$; (ii) the transmission time of a maximum frame of the SCT due to the non-preemption feature, i.e., one SCT frame may start its transmission just before the consumed credit reaches $L_M$; (iii) the time to consume the gained credit during the transmission of one additional maximum frame of RC traffic at the end of the \textit{idle window}. The latter parameter is due to the fact that the resume level of BLS, $L_R$, is the lower threshold on the consumed credit to trigger the priority change of the SCT from lowest to highest, and not an extreme value for the consumed credit itself. Actually, if a frame of RC traffic has been transmitted just at the end of the \textit{idle window}, 
	the consumed credit keeps decreasing until it either reaches 0, or the transmission ends. Therefore, the lowest value the consumed credit can reach due to the non-preemption feature is $\max(0,L_R-\frac{MFS_{RC}}{C}.I_{idle})$. The additional time during which the consumed credit can then increase with a slop $I_{send}$ is $ \frac{L_R - \max(0,L_R-\frac{MFS_{RC}}{C}.I_{idle})}{I_{send}}$. 
	The maximum \textit{sending window} duration is then as follows:		
	\begin{equation}\label{IsendSCMax}
	\Delta_{send}^{max} = \frac{L_M-L_R}{I_{send}} + \frac{MFS_{SCT}}{C}+\min(\frac{MFS_{RC}}{C}\cdot \frac{I_{idle}}{I_{send}},\frac{L_R}{I_{send}})
	\end{equation}
	However, it is worth noting that the consumed credit may start at $0$, such as at the initialisation phase or after a long period of inactivity. Hence, the maximum \textit{sending window} duration covering such possibility, $\Delta_{send,0}^{max}$, is as follows:
	\begin{equation}
	\label{IsendSCMax0}
	\Delta_{send,0}^{max} = \frac{L_M}{I_{send}} +\frac{MFS_{SCT}}{C} 
	\end{equation}
	
	The minimum \textit{idle window} duration, $\Delta_{idle}^{min} $, is simply the time it takes for the consumed credit to go from $L_M$ to $L_R$ with a decreasing slope of $I_{idle}$:
	\begin{equation}
	\label{IidleSCMin}
	\Delta_{idle}^{min} = \frac{L_M-L_R}{I_{idle}}
	\end{equation}

	Therefore, the maximum service curve guaranteed to the SCT, $\gamma_{SCT}^{bls} $ is defined in Th. \ref{Th:SCT-max}.
	\begin{Theorem}[Maximum Service Curve of SCT in BLS] 
		\label{Th:SCT-max}
		Consider a SCT crossing a server with a constant rate $C$, implementing a BLS shaper. The maximum service curve guaranteed to the SCT is as follows. 
		\begin{equation}
		\label{gammaRateLatency}
		\gamma_{SCT}^{bls} (t) = \left\{ \begin{array}{ll}
		\textrm{ if no RC traffic: } C \cdot t \nonumber\\
		\textrm{Otherwise:}\nonumber\\
		\frac{\Delta_{send}^{max}}{\Delta^{nom}_{\gamma_{SCT}}}\cdot C\cdot t +\Delta_{send,0}^{max}\cdot C \cdot\frac{\Delta_{idle}^{min}}{ \Delta^{nom}_{\gamma_{SCT}}}\nonumber
		\end{array} \right.
		\end{equation}			
		with $\Delta^{nom}_{\gamma_{SCT}}=\Delta_{send}^{max}+\Delta_{idle}^{min}$.
	\end{Theorem}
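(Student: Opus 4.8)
The plan is to invoke the definition of a maximum service curve (Def.~\ref{def:max-service-curve}) through the following sufficient condition: if the SCT output cumulative function satisfies $R^*_{SCT}(t)-R^*_{SCT}(s)\le \gamma_{SCT}^{bls}(t-s)$ for every $0\le s\le t$, then, since a server cannot output data before receiving it (so $R^*_{SCT}\le R_{SCT}$), one gets $R^*_{SCT}(t)\le \inf_{0\le s\le t}\{R_{SCT}(s)+\gamma_{SCT}^{bls}(t-s)\}=(R_{SCT}\otimes\gamma_{SCT}^{bls})(t)$, which is exactly Def.~\ref{def:max-service-curve}. Hence the whole proof reduces to upper bounding the SCT output increment $\Delta R^*_{SCT}(\delta)$ over an arbitrary window of length $\delta=t-s$, dually to the minimum-service-curve argument of Th.~\ref{Th:SCT}.

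First I would collect the structural facts on how SCT can be served. SCT is transmitted only during \emph{sending windows}, at rate at most $C$; the longest possible sending window lasts $\Delta_{send}^{max}$ (Eq.~\ref{IsendSCMax}), except that a window starting from a consumed credit equal to $0$ (initialisation phase or long inactivity) can last up to $\Delta_{send,0}^{max}$ (Eq.~\ref{IsendSCMax0}); moreover, in the presence of RC traffic any two consecutive sending windows are separated by an \emph{idle window} of duration at least $\Delta_{idle}^{min}$ (Eq.~\ref{IidleSCMin}). Since driving the consumed credit down to $0$ requires an idle interval strictly longer than $\Delta_{idle}^{min}$, at most one sending window inside $\delta$ can carry the extended duration $\Delta_{send,0}^{max}$, and placing it first maximizes the output.

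Next I would build the output-maximizing (extremal) trajectory $\bar R$ consistent with these constraints: it rises at rate $C$ for a duration $\Delta_{send,0}^{max}$, stays flat for $\Delta_{idle}^{min}$, then alternates rises of duration $\Delta_{send}^{max}$ at rate $C$ with flats of duration $\Delta_{idle}^{min}$. Any admissible $\Delta R^*_{SCT}(\cdot)$ is dominated by $\bar R$, so it suffices to show $\bar R(\delta)\le\gamma_{SCT}^{bls}(\delta)$. Because $\gamma_{SCT}^{bls}$ is affine with slope $\tfrac{\Delta_{send}^{max}}{\Delta^{nom}_{\gamma_{SCT}}}C\in(0,C)$, the gap $\gamma_{SCT}^{bls}-\bar R$ is monotone on each rising and each flat segment, so the inequality only needs to be verified at the tops of the steps, where $\bar R$ attains $C(\Delta_{send,0}^{max}+(n-1)\Delta_{send}^{max})$ at abscissa $\Delta_{send,0}^{max}+(n-1)\Delta^{nom}_{\gamma_{SCT}}$. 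Substituting these abscissae into $\gamma_{SCT}^{bls}$, the $n$-dependent terms $(n-1)C\Delta_{send}^{max}$ cancel on both sides, and the required constant offset collapses exactly to $\Delta_{send,0}^{max}\cdot C\cdot \tfrac{\Delta_{idle}^{min}}{\Delta^{nom}_{\gamma_{SCT}}}$, i.e. the stated intercept. This yields $\Delta R^*_{SCT}(\delta)\le\gamma_{SCT}^{bls}(\delta)$ and hence the claim for the RC-present branch. The no-RC branch is immediate: with no competing RC traffic nothing forces an idle window, and a constant-rate-$C$ server always admits $C\cdot t$ as a maximum service curve.

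Equivalently, one may phrase this as a counting argument mirroring the $\beta_{SCT}^{bls}$ proof, bounding the number $q$ of sending windows overlapping $\delta$ by $q\le 1+\tfrac{\delta-\Delta_{send,0}^{max}}{\Delta^{nom}_{\gamma_{SCT}}}$ and the output by $C\,\Delta_{send,0}^{max}+(q-1)\,C\,\Delta_{send}^{max}$. The step I expect to be the main obstacle is the rigorous treatment of the boundary (partial) sending and idle windows at the two ends of $\delta$: the counting form hides this difficulty behind the integer $q$, whereas the extremal-trajectory/domination argument disposes of it cleanly, which is why I would privilege the latter. A secondary subtlety is justifying that a single window $\delta$ can contain at most one $\Delta_{send,0}^{max}$ sending window, so that the burst term enters the intercept exactly once.
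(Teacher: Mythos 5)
Your overall architecture is sound, and two parts of it are actually sharper than the paper: the reduction of Def.~\ref{def:max-service-curve} to increment bounds over \emph{arbitrary} windows (the paper only bounds increments over periods where both SCT and RC are backlogged, and leaves the link to the definition implicit), and the step-top verification, which is exact --- the tops of your greedy staircase lie precisely on the affine curve, so the intercept $\Delta_{send,0}^{max}\cdot C\cdot\Delta_{idle}^{min}/\Delta^{nom}_{\gamma_{SCT}}$ falls out as you say. Your route is also genuinely different in mechanism: you bound the output directly against the timeline, whereas the paper never counts cycles against $\delta$ at all; it bounds the SCT output by $p\cdot C\cdot\Delta_{send}^{max}+C\cdot\Delta_{send,0}^{max}$, lower-bounds the concurrent RC output by $p\cdot C\cdot\Delta_{idle}^{min}$ (RC transmits at rate $C$ during every idle window, being backlogged), and then eliminates the unknown $p$ through the capacity inequality $C\cdot\delta\ge\Delta R^*_{SCT}(\delta)+\Delta R^*_{RC}(\delta)$ --- the exact dual of the proof of Th.~\ref{Th:SCT}.

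The genuine gap is the domination step, i.e.\ exactly the point you dismiss as a ``secondary subtlety''. Your justification --- reaching zero credit needs an idle interval strictly longer than $\Delta_{idle}^{min}$, hence at most one extended window of length $\Delta_{send,0}^{max}$ inside $\delta$ --- is a non sequitur, and the claim itself is false in precisely the setting your own reduction requires (arbitrary windows, SCT not necessarily backlogged): whenever SCT stays idle long enough, the credit decays to $0$ again, so one window $\delta$ can contain several sendings of duration $\Delta_{send,0}^{max}$. The stated $\gamma_{SCT}^{bls}$ still upper-bounds such trajectories, but for a quantitative reason you never invoke: each restart from zero credit buys at most $\Delta_{send,0}^{max}-\Delta_{send}^{max}\le L_R/I_{send}$ of extra sending time while costing at least $L_R/I_{idle}$ of extra idle time, and the resulting staircase stays below the affine curve iff $(L_M-L_R)\cdot\Delta_{send,0}^{max}\le L_M\cdot\Delta_{send}^{max}$, which does hold since $L_M\Delta_{send}^{max}-(L_M-L_R)\Delta_{send,0}^{max}=L_R\frac{MFS_{SCT}}{C}+L_M\min\left(\frac{MFS_{RC}}{C}\cdot\frac{I_{idle}}{I_{send}},\frac{L_R}{I_{send}}\right)\ge 0$. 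Without this amortization your ``extremal'' trajectory is not proved extremal, only plausible. Note how the paper sidesteps the issue: inside a doubly-backlogged period, RC contention keeps the credit floor at $\max\left(0,L_R-\frac{MFS_{RC}}{C}I_{idle}\right)$ after the first sending window, so either that floor is positive and only the first window can be extended, or it is zero and then $\Delta_{send}^{max}=\Delta_{send,0}^{max}$ so the distinction vanishes; that is what legitimates its ``at most once during $\Delta_{send,0}^{max}$'' count --- at the price of the informality about Def.~\ref{def:max-service-curve} that you, conversely, handled cleanly.
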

	
	\begin{proof} 
		First, it is obvious that in the absence of RC traffic, SCT can use the maximum service $\gamma(t) = C.t$. Then, for the more general case, consider $R^*_{SCT} (t)$ the output cumulative function of the SCT at the output of the server implementing a BLS, and $\Delta R^*_{SCT}(\delta)$ the variation of the output cumulative function during $\delta$. 
		
		During a backlogged period $\delta$ for SCT and RC traffic, the SCT has at most $p+1$ opportunities with $p$ times a full service constrained by $\gamma$ during the maximum sending window $\Delta_{send}^{max}$, in addition to once during $\Delta_{send,0}^{max}$, then:
		\begin{equation}
		\Delta R^*_{SCT} (\delta) \leq p\cdot C\cdot \Delta_{send}^{max}+C\cdot\Delta_{send,0}^{max}
		\label{max-src-sct}
		\end{equation}
		The main idea is to find an upper bound of $p$ to define the maximum service curve guaranteed to SCT, $\gamma_{SCT}^{bls}$. On the other hand, if SCT has at most $p+1$ opportunities to be transmitted, then the RC traffic has at least $p$ opportunities to be transmitted during the minimum idle window, $\Delta_{idle}^{min}$, then:
		\begin{equation}
		\Delta R^*_{RC} (\delta) \geq p \cdot C\cdot\Delta_{idle}^{min}
		\label{min-src-rc}
		\end{equation}
		Giving the maximum service curve property of $\gamma$ and using Eq. (\ref{min-src-rc}), we have: 
		\begin{eqnarray}
		C\cdot\delta &\geq & \Delta R^*_{SCT}(\delta) +\Delta R^*_{RC}(\delta) \nonumber \\
		&  \geq & \Delta R^*_{SCT}(\delta)+p\cdot C\cdot\Delta_{idle}^{min}\nonumber
		\end{eqnarray}
		Consequently, the upper bound of $p$ is as follows:
		\begin{eqnarray}
		\label{UBP}
		p &\leq & \frac{C\cdot\delta-\Delta R^*_{SCT}(\delta)}{C\cdot\Delta_{idle}^{min}}
		\end{eqnarray}
		When injecting Eq.(\ref{UBP}) in Eq.(\ref{max-src-sct}), we obtain:
		
		$$\Delta R^*_{SCT}(\delta) \leq   \frac{C\cdot\delta-\Delta R^*_{SCT}(\delta)}{C\cdot\Delta_{idle}^{min}}\cdot C\cdot\Delta_{send}^{max}+C\cdot\Delta_{send,0}^{max}$$
		$$ \Delta R^*_{SCT}(\delta)\cdot\left( 1+\frac{\Delta_{send}^{max}}{\Delta_{idle}^{min}}\right)  \leq   \frac{\delta}{\Delta_{idle}^{min}}\cdot C\cdot\Delta_{send}^{max}+C\cdot\Delta_{send,0}^{max}$$
		$$\Delta R^*_{SCT}(\delta) \leq   \frac{\frac{\delta}{\Delta_{idle}^{min}}\cdot C\cdot\Delta_{send}^{max}+C\cdot\Delta_{send,0}^{max}}{ 1+\frac{\Delta_{send}^{max}}{\Delta_{idle}^{min}}} $$
		$$ \Delta R^*_{SCT}(\delta) \leq \frac{\Delta_{send}^{max}}{\Delta^{nom}_{\gamma_{SCT}}}\cdot C\cdot \delta +\Delta_{send,0}^{max}\cdot C \cdot\frac{\Delta_{idle}^{min}}{ \Delta^{nom}_{\gamma_{SCT}}}$$
		where $\Delta^{nom}_{\gamma_{SCT}}=\Delta_{send}^{max}+\Delta_{idle}^{min}$.
	\end{proof}
	
	\subsection{Maximum Output Arrival Curve of SCT}
	
	The maximum output arrival curve of SCT, from the BLS is detailed in the following Corollary:
	\begin{Corollary} [Maximum Output Arrival Curve of SCT from BLS]
		\label{cor:SCT-max-arrival}
		Consider a SCT with a maximum leaky-bucket arrival curve $\alpha$ at the input of a BLS shaper, guaranteeing a minimum rate-latency service curve $\beta_{SCT}^{bls}$ and a maximum service curve $\gamma_{SCT}^{bls}$. The maximum output arrival curve is:
		\begin{equation}
		\label{MaxArrival}
		\alpha^{*,bls}_{SCT} (t)=\min(\gamma_{SCT}^{bls}(t),\alpha \oslash \beta_{SCT}^{bls}(t) )
		\end{equation}
	\end{Corollary}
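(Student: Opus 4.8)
The plan is to exhibit two arrival curves that simultaneously constrain the SCT stream leaving the BLS, and then to take their pointwise minimum, using the elementary fact that a flow conforming to two arrival curves also conforms to their minimum. The two curves are exactly the two arguments of the $\min$ in Eq.~(\ref{MaxArrival}): the bound induced by the strict minimum service curve $\beta_{SCT}^{bls}$, and the maximum service curve $\gamma_{SCT}^{bls}$ itself.

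First I would derive the minimum-service contribution. The SCT flow is $\alpha$-constrained at the BLS input and, by Th.~\ref{Th:SCT}, receives the rate-latency service curve $\beta_{SCT}^{bls}$; applying the output bound of Th.~\ref{PerformanceBounds} then shows directly that $\alpha \oslash \beta_{SCT}^{bls}$ is a valid output arrival curve. This step is the classical deconvolution bound and requires nothing beyond substituting $\beta_{SCT}^{bls}$.

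Second I would show that the maximum service curve is itself an output arrival curve. By Def.~\ref{def:max-service-curve} the output cumulative function satisfies $R^*_{SCT} \le R_{SCT}\otimes\gamma_{SCT}^{bls}$. Since $\gamma_{SCT}^{bls}$ is affine with a nonnegative intercept for $t>0$ and $\gamma_{SCT}^{bls}(0)=0$ (as read from Th.~\ref{Th:SCT-max}), it is sub-additive, and the maximum-service inequality then forces the output increment over any interval of length $\tau$ to be bounded by $\gamma_{SCT}^{bls}(\tau)$, exactly as a greedy shaper conforms its output to its shaping curve. Hence $\gamma_{SCT}^{bls}$ meets Def.~\ref{def:arrivalCurve} for $R^*_{SCT}$.

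Combining the two valid curves yields Eq.~(\ref{MaxArrival}). The first step is routine, so I expect the second to be the main obstacle: turning the cumulative inequality $R^*_{SCT}\le R_{SCT}\otimes\gamma_{SCT}^{bls}$ into a genuine interval (arrival-curve) bound on the output is the delicate point, and it is precisely where the sub-additivity and the normalisation $\gamma_{SCT}^{bls}(0)=0$ of the affine curve of Th.~\ref{Th:SCT-max} are indispensable. As a consistency check I would also verify that this $\min$ coincides with the tight deconvolution form $(\gamma_{SCT}^{bls}\otimes\alpha)\oslash\beta_{SCT}^{bls}$ of Th.~\ref{PerformanceBounds} whenever the rate of $\gamma_{SCT}^{bls}$ dominates that of $\alpha$, which holds in the regime where SCT is allotted sufficient service.
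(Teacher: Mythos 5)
Your step 1 (the deconvolution bound $\alpha \oslash \beta_{SCT}^{bls}$ is an output arrival curve) and your step 3 (the pointwise minimum of two valid arrival curves is a valid arrival curve) are fine, but step 2 --- the claim that the maximum service curve $\gamma_{SCT}^{bls}$ is \emph{by itself} an arrival curve for the output --- is false in general, and it is exactly the step you flagged as delicate. The maximum-service property of Def.~\ref{def:max-service-curve} only caps the output in terms of the \emph{input's} history, $R^*_{SCT}(t) \le R_{SCT}(t-s) + \gamma_{SCT}^{bls}(s)$; it does not prevent the server from accumulating backlog and flushing it as an instantaneous burst. Concretely, consider a node that stores all input until time $T$, releases everything at $T$, and acts as a wire afterwards, fed with $R(t)=rt$: it offers the strict rate-latency service curve $\beta(t)=r(t-T)^+$ and the maximum service curve $\gamma(t)=Ct$ for any $C \ge r$ (since $R^* \le R = R\otimes\gamma$), yet the output jumps by $rT$ at time $T$, so $R^*(t)-R^*(s) \le \gamma(t-s)$ fails on short intervals even though $\gamma$ is sub-additive with $\gamma(0)=0$. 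The greedy-shaper argument you invoke needs the \emph{equality} $R^*=R\otimes\sigma$: one writes $R^*(t) \le (R\otimes\sigma)(s) + \sigma(t-s)$ by sub-additivity and then replaces $(R\otimes\sigma)(s)$ by $R^*(s)$ using that equality. With only the inequality of Def.~\ref{def:max-service-curve}, the substitution goes in the wrong direction (you would need $(R\otimes\gamma)(s) \le R^*(s)$, but you only know $R^*(s) \le (R\otimes\gamma)(s)$), and the gap is the backlog, which is precisely what cannot be neglected.

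The paper avoids this trap by never asserting that $\gamma_{SCT}^{bls}$ alone constrains the output. It starts from the tight output arrival curve $(\gamma_{SCT}^{bls}\otimes\alpha)\oslash\beta_{SCT}^{bls}$ of Th.~\ref{PerformanceBounds}, in which $\gamma$ remains coupled to $\alpha$ and $\beta$, and then upper-bounds it algebraically: a generalization of the rule $(f\otimes g)\oslash g \le f \otimes (g \oslash g)$ to three functions gives $(\gamma_{SCT}^{bls}\otimes\alpha)\oslash\beta_{SCT}^{bls} \le \gamma_{SCT}^{bls}\otimes(\alpha\oslash\beta_{SCT}^{bls})$, and this convolution is then bounded by $\min\bigl(\gamma_{SCT}^{bls},\alpha\oslash\beta_{SCT}^{bls}\bigr)$, exploiting the leaky-bucket/rate-latency structure so that both curves can be taken null at the origin. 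So the place where your route genuinely diverges from the paper's --- splitting the statement into two independent arrival-curve claims --- is also where it fails: any correct derivation must keep $\gamma_{SCT}^{bls}$ inside the composite expression (or else argue directly from the BLS credit mechanics), rather than promote it to a standalone output arrival curve.
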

	
	\begin{proof}
		To prove Corollary \ref{cor:SCT-max-arrival}, we generalize herein the rule 13 in p. 123 in \cite{leboudecthiran12}, i.e., $(f\otimes g)\oslash g \leq f \otimes (g\oslash g) $, to the case of three functions $f$, $g$ and $h$ when $g\oslash h \in \mathcal{F}$, where $\mathcal{F}$ is the set of non negative and wide sense increasing functions:
		\begin{displaymath}
		\mathcal{F}=\{ f: \mathbb{R}^+ \rightarrow \mathbb{R}^+ \mid f(0)=0, \forall t \geq s: f(t) \geq f(s) \}
		\end{displaymath}
		According to Theorem \ref{PerformanceBounds}, we have $\alpha^*(t)= (\gamma_{SCT}^{bls}\otimes\alpha)\oslash \beta_{SCT}^{bls}$. Moreover, in the particular case of a leaky-bucket arrival curve $\alpha$ and a rate-latency service curve $\beta_{SCT}^{bls}$, $\alpha\oslash \beta_{SCT}^{bls}$ is a leaky-bucket curve, which is in $\mathcal{F}$. Hence, we have the necessary condition to prove the following:		
		$$(\alpha\otimes\gamma)\oslash\beta(t)\leq  \gamma\otimes(\alpha\oslash\beta)(t)  \leq \min(\gamma(t),\alpha\oslash\beta(t))   $$
	
	\end{proof}	
	\subsection{Minimum Strict Service Curves of RC}
	\begin{Theorem} [Minimum Strict Service Curves of RC]
		\label{th:bls-min-servicel}
		Consider a SCT with a maximum leaky-bucket arrival curve $\alpha$ at the input of a server with a constant rate $C$ implementing a BLS shaper, guaranteeing a minimum rate-latency service curve $\beta_{SCT}^{bls}$ and a maximum service curve $\gamma_{SCT}^{bls}$. The server is crossed by SCT and RC traffic, where SCT has the highest priority. The minimum strict service curve guaranteed to RC traffic in the NP-SP, integrating the impact of the BLS, is as follows:		
		\begin{eqnarray}
		\label{betaNSCTmax}
		\beta_{RC}^{sw} (t) &=& \left[ \max(\beta_{RC}^{sp}(t), \beta_{RC}^{bls}(t)) - \max\limits_{k \in i, p(i) \geq p(RC) } MFS_k \right ]_ \uparrow \nonumber
		\end{eqnarray}		
		\begin{eqnarray}
		\nonumber
		\textrm{where: } \beta_{RC}^{sp} (t) & = &(C \cdot t-\alpha_{SCT}^{bls}\oslash\beta_{SCT}^{bls}(t))^+\\
		\beta_{RC}^{bls} (t)&= & \frac{\Delta_{idle}^{min}}{\Delta^{nom}_{\gamma_{SCT}}}\cdot C\cdot \left( t-\Delta_{send,0}^{max}\right)^+ \nonumber\\
		\textrm{with: }\Delta^{nom}_{\gamma_{SCT}}&=&\Delta_{send}^{max}+\Delta_{idle}^{min}
		\end{eqnarray}	
	\end{Theorem}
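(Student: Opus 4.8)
The plan is to establish that $\beta_{RC}^{sw}$ is a strict minimum service curve by exhibiting \emph{two} independent lower bounds on the RC service, $\beta_{RC}^{sp}$ and $\beta_{RC}^{bls}$, each valid over any RC-backlogged period, and then to combine them. The first bound captures the regime where SCT traffic is scarce, so that RC essentially inherits the left-over of the constant-rate server; the second captures the regime where SCT saturates its sending windows, so that RC is only guaranteed the share of bandwidth the BLS concedes during idle windows. Since the pointwise maximum of two strict service curves is again a strict service curve (over any backlogged period $\delta$, $\Delta R^*_{RC}(\delta)$ dominates each candidate, hence their maximum, by Def. \ref{def:strict-min-service-curve}), it suffices to prove each bound separately, then subtract the non-preemption blocking term and apply $(\cdot)_\uparrow$.

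First I would derive $\beta_{RC}^{sp}$ as the NP-SP left-over service. On the constant-rate server, $C\cdot t$ is a strict service curve, and SCT is the only class of strictly higher priority that RC sees while being served. By Th. \ref{PerformanceBounds} together with Cor. \ref{cor:SCT-max-arrival}, the SCT traffic leaving the BLS is constrained by the output arrival curve $\alpha_{SCT}^{bls}\oslash\beta_{SCT}^{bls}$; subtracting this upper bound on the higher-priority load from $C\cdot t$ in the spirit of Cor. \ref{cor:residual-service-curve} yields $\beta_{RC}^{sp}(t)=(C\cdot t-\alpha_{SCT}^{bls}\oslash\beta_{SCT}^{bls}(t))^+$ as a valid strict lower bound on the RC service.

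The core of the argument is the derivation of $\beta_{RC}^{bls}$, which mirrors the window-counting proof of Th. \ref{Th:SCT} but now from the RC viewpoint. Fix an RC-backlogged period $\delta$ and let $q$ be the number of idle windows it contains; during each such window RC is served at full rate $C$ for at least $\Delta_{idle}^{min}$, so $\Delta R^*_{RC}(\delta)\geq q\cdot C\cdot\Delta_{idle}^{min}$. To turn this into a bound in $\delta$, I would bound the elapsed time before the $q$-th idle window completes: in the worst case RC is blocked by sending windows of maximal length, the very first of which may start from zero credit and hence last $\Delta_{send,0}^{max}$, while each of the remaining ones lasts at most $\Delta_{send}^{max}$. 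This gives $\delta\leq\Delta_{send,0}^{max}+q\cdot(\Delta_{send}^{max}+\Delta_{idle}^{min})$, i.e. $q\geq(\delta-\Delta_{send,0}^{max})/\Delta^{nom}_{\gamma_{SCT}}$. Substituting back and using $\Delta R^*_{RC}(\delta)\geq 0$ produces the rate-latency form $\beta_{RC}^{bls}(t)=\frac{\Delta_{idle}^{min}}{\Delta^{nom}_{\gamma_{SCT}}}\cdot C\cdot(t-\Delta_{send,0}^{max})^+$.

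Finally, both curves being strict service curves for RC, so is their pointwise maximum; the non-preemption of a lower- or equal-priority frame already in transmission when the backlogged period begins costs at most $\max_{k\in i,\,p(i)\geq p(RC)}MFS_k$, which I subtract, and $(\cdot)_\uparrow$ restores monotonicity, giving exactly $\beta_{RC}^{sw}$. I expect the main obstacle to be the worst-case accounting in the $\beta_{RC}^{bls}$ step: justifying rigorously that at most $q+1$ maximal sending windows can interleave $q$ minimal idle windows, and in particular that charging the single anomalous $\Delta_{send,0}^{max}$ window (credit starting from $0$) only \emph{once}, as a pure latency term rather than recurrently, is legitimate over an arbitrary backlogged period.
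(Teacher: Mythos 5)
Your proposal is correct, but it reaches the key term $\beta_{RC}^{bls}$ by a genuinely different route than the paper. The paper's proof is purely algebraic and reuses earlier results: it applies the NP-SP left-over formula (Cor.~\ref{cor:residual-service-curve}) with the SCT \emph{output} arrival curve of Cor.~\ref{cor:SCT-max-arrival}, $\alpha_{SCT}^{*,bls}=\min\bigl(\gamma_{SCT}^{bls},\alpha\oslash\beta_{SCT}^{bls}\bigr)$, then uses the identity $C\cdot t-\min(A,B)=\max\bigl(C\cdot t-A,\,C\cdot t-B\bigr)$ and observes that $\bigl(C\cdot t-\gamma_{SCT}^{bls}(t)\bigr)^+$ equals exactly $\frac{\Delta_{idle}^{min}}{\Delta^{nom}_{\gamma_{SCT}}}\cdot C\cdot\bigl(t-\Delta_{send,0}^{max}\bigr)^+$. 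In other words, in the paper $\beta_{RC}^{bls}$ is literally the complement of the maximum service curve of Th.~\ref{Th:SCT-max}, so all of the worst-case window accounting is inherited from that theorem rather than redone. You instead re-derive $\beta_{RC}^{bls}$ from first principles by window counting on the RC side (complete idle windows of length at least $\Delta_{idle}^{min}$, interleaved with maximal sending windows of which only the first may last $\Delta_{send,0}^{max}$), and then invoke the easy fact that a pointwise maximum of strict service curves is again a strict service curve. Both are sound; the obstacle you flag yourself, charging $\Delta_{send,0}^{max}$ only once, does go through: within an RC-backlogged period the credit can drop below $L_R$ only by the non-preemption overhang already folded into $\Delta_{send}^{max}$, so only the first sending window can start from zero credit. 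This is precisely the accounting the paper sidesteps, because it was settled once in the proof of Th.~\ref{Th:SCT-max} (the ``$p$ windows of $\Delta_{send}^{max}$ plus one of $\Delta_{send,0}^{max}$'' count). What each approach buys: the paper's route is shorter, keeps $\gamma_{SCT}^{bls}$ and $\beta_{RC}^{bls}$ consistent by construction, and needs no new worst-case reasoning; yours is self-contained, gives $\beta_{RC}^{bls}$ a direct operational meaning (RC's guaranteed share of the idle windows), and establishes strictness of that term directly, without passing through the maximum-service-curve machinery and the generalized deconvolution rule of Cor.~\ref{cor:SCT-max-arrival}.
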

	
	\begin{proof}
		According to Cor. \ref{cor:residual-service-curve}, the residual minimum strict service curve guaranteed to RC traffic crossing a NP-SP scheduler is as follows:
		\begin{equation}
		\label{Eq:RC}
		\beta_{RC}^{sw}(t) =\left[ C \cdot t- \alpha_{SCT}^{*,bls}(t) -  \max\limits_{k \in i, p(i) \geq p(RC) } MFS_k \right ]_\uparrow
		\end{equation}
		
		Moreover, according to Cor. \ref{cor:SCT-max-arrival}, the maximum output arrival curve of SCT from the BLS, $ \alpha_{SCT}^{*,bls}(t)$, is:
		\begin{equation}
		\label{eq:SCT}
		\alpha_{SCT}^{*,bls}(t)  =\min(\gamma_{SCT}^{bls}(t),\alpha \oslash \beta_{SCT}^{bls}(t))
		\end{equation}
		
		Using Eq. (\ref{Eq:RC}) and Eq. (\ref{eq:SCT}), we can deduce the following:
		\begin{eqnarray}
		\label{eq:rsc}
		&&\beta_{RC}^{sw} (t) = [ C \cdot t - \min(\alpha_{SCT}^{bls}\oslash\beta_{SCT}^{bls}(t), \gamma_{SCT}^{bls}(t)) \nonumber \\
		&&-\max\limits_{k \in i, p(i) \geq p(RC) } MFS_k  ]_ \uparrow \nonumber \\
		&& = [ \max ((C \cdot t - \alpha_{SCT}^{bls}\oslash\beta_{SCT}^{bls}(t))^+, (C \cdot t -\gamma_{SCT}^{bls}(t))^+ ) \nonumber \\
		&& - \max\limits_{k \in i, p(i) \geq p(RC) } MFS_k  ]_\uparrow \nonumber \\ 
		&& = [ \max ((C \cdot t - \alpha_{SCT}^{bls}\oslash\beta_{SCT}^{bls}(t))^+,\nonumber \\&& \frac{\Delta_{idle}^{min}}{\Delta^{nom}_{\gamma_{SCT}}}\cdot C\cdot \left( t-\Delta_{send,0}^{max}\right)^+ ) \nonumber \\
		&& - \max\limits_{k \in i, p(i) \geq p(RC) } MFS_k  ]_\uparrow \nonumber 
		\end{eqnarray}
		
	\end{proof}

	\section{Performance Analysis}
	\label{PA}
	In this section, we conduct performance analysis of the proposed Extended AFDX to evaluate its efficiency to support mixed criticality data, in comparison to the legacy AFDX solution (i.e. AFDX with regular NP-SP scheduler). This evaluation is based on the worst-case timing analysis methodology and the various service curves detailed in Sections \ref{TAM} and \ref{SC-NC}, respectively.
	First, we describe our realistic avionics case study. Afterwards, we assess the scalability of the Extended AFDX in comparison to the legacy one based on the current avionics traffic configuration. Finally, we analyse the impact of our proposal on SCT and RC performance, in comparison to the legacy solution, when considering future avionics traffic configurations.
	
	\subsection{Avionics Case Study}
	Our case study is a representative avionics communication architecture of the A380, based on a 1-Gigabit AFDX\footnote{The 1-Gigabit version of the AFDX is under specification.} backbone network, which consists of 4 switches and 64 end-systems as shown in Fig. \ref{fig:rlfullnetwork} (a). The maximum utilisation rate on the 100Mbps legacy AFDX on board the A380 is 30\%. Thus, on the 1Gigabit version, the maximum utilisation rate will be only of 3 \%. However, there is only legacy AFDX traffic, i.e., RC, circulating on this current communication architecture. Hence, to enable the performance analysis of our proposed Extended AFDX, we have extended this current traffic configuration, denoted herein as \textit{legacy reference configuration}, to support different traffic profiles generated by each end-system, which are described in Tab. \ref{table:multitrafficprofiles}. Each traffic class is characterised by $(MFS, BAG,$ $Dl, J)$ as detailed in Section \ref{model}. The SCT has a deadline of 2ms, and because of the BLS behaviour it admits the highest priority 0 during the sending windows and the priority below RC (2) during the idle windows. Each circulating traffic flow on the backbone network is a multicast flow with $16$ destinations, and crosses two successive switches before reaching its final destinations. The first switch in the path receives traffic from $16$ end-systems to forward it in a multicast way to its two neighbouring switches. Afterwards, the second switch in the path, which receives traffic from the two predecessor switches, forwards the traffic in its turn to the final end-system. Figure \ref{fig:rlfullnetwork} (b) shows the traffic communication patterns between the source and the final destinations of a given flow.
	\begin{table}[h!]
		\footnotesize
		\centering
		\begin{tabular}{|c|c|c|c|c|c|}
			\hline
			Priority & Traffic type & MFS & BAG & deadline& jitter  \\
			&   & (Bytes)          & (ms)&(ms) &(ms)\\
			\hline
			0/2 & SCT & 64 & 2 & 2 & 0 \\  
			\hline
			1 &  RC & 320 & 2 & 2 & 0\\
			\hline
			3 &	BE & 1024 & 8 & none & 0.5 \\ 
			\hline
		\end{tabular}
		\footnotesize \caption{Avionics flow Characteristics}
		\label{table:multitrafficprofiles}
	\end{table}
	
	The main considered performance metrics are: (i) the maximum utilisation rate of each traffic class, that can be sent on the Extended AFDX architecture while respecting the system constraints described in Section \ref{scheduTest}. This metric enables the scalability analysis of our proposal, in comparison with the legacy AFDX; (ii) the delay bound of SCT and RC classes to prove the predictability of the Extended AFDX and analyse its impact on the system timing performance, in comparison with the legacy AFDX. It is worth noting that because the BE does not have a deadline, and its largest impact on the other priorities is the transmission time of a maximum sized frame, the timing performance of this class is not detailed in this paper.
	
	To compute both performance metrics, we consider four scenarios, i.e., scenarios 1 and 2 are for the scalability analysis; whereas scenarios 3 and 4 are for timing analysis. 
	
	Concerning scenarios 1 and 2, we have started from the \textit{legacy reference configuration}, and then computed the maximum utilisation rate of SCT class that can be transmitted on the legacy AFDX, while respecting the system constraints in the presence of 3\% of RC traffic. This computation has shown that the legacy AFDX can support up to a maximum utilisation rate for SCT of $UR^{bn}_{SCT}=28.7\%$. Hence, scenario 1 (resp. scenario 2) consists in starting from a traffic configuration characterised by ($UR^{bn}_{SCT}=28.7\%; UR^{bn}_{RC}=3\%$) circulating on the Extended AFDX, then increasing the $UR^{bn}_{SCT}$ (resp. $UR^{bn}_{RC}$) until finding the maximum value which still respects the system constraints. The aim of this scenario is to compare the scalability of the Extended and legacy AFDX solutions, when increasing the congestion due to SCT (resp. RC) traffic. The results of both scenarios (1 and 2)  are detailed in Section \ref{scalability}.
	
	Afterwards, to have an idea about the timing performance of future avionics configurations based on the 1 Gigabit AFDX technology, which may very probably support higher utilisation rate of SCT and RC traffic than the \textit{legacy reference configuration}, we consider scenarios 3 and 4 described in Table \ref{table34}. As it can be noticed, the principle of scenario 3 (resp. scenario 4) is to fix the utilisation rate of RC class (resp. SCT class)  at 20\% and vary the SCT (resp. RC) utilisation rate to assess the impact of increasing network congestion on the timing performance. The considered BLS parameters are the same for both scenarios: $BW=0.46$ to support a maximum utilisation rate of SCT of $UR^{bn}_{SCT}=45\%$ (this is an upper bound for the estimated future needs in terms of SCT traffic); $L_R=0$ and $L_M=22,118$ to enable the transmission of a maximum SCT burst within the BLS of 80 frames during a minimum sending window, i.e., a generated burst of 5 SCT flows per End-System. Moreover, as it is illustrated in Table \ref{table34}, the variation of the utilisation rate of a class $i$ is obtained through increasing the number of generated traffic flows within each end-system, $n_{i}^{es}$. Thus, the bottleneck  utilisation rate is equal to $UR^{bn}_{i} = \frac{C_i}{C}$ with  $C_i$ the capacity used in the bottleneck by the aggregated flows of class $i\in \{RC,SCT\}$ and $C_i = 16 \cdot n_{i}^{es}\cdot \frac{MFS_{i}}{BAG_{i}}$. The results of both scenarios (3 and 4) are detailed in Section \ref{timing}.

	\begin{figure}[htbp]
		\centering
		\subfigure[]{\includegraphics[width=0.35\linewidth]{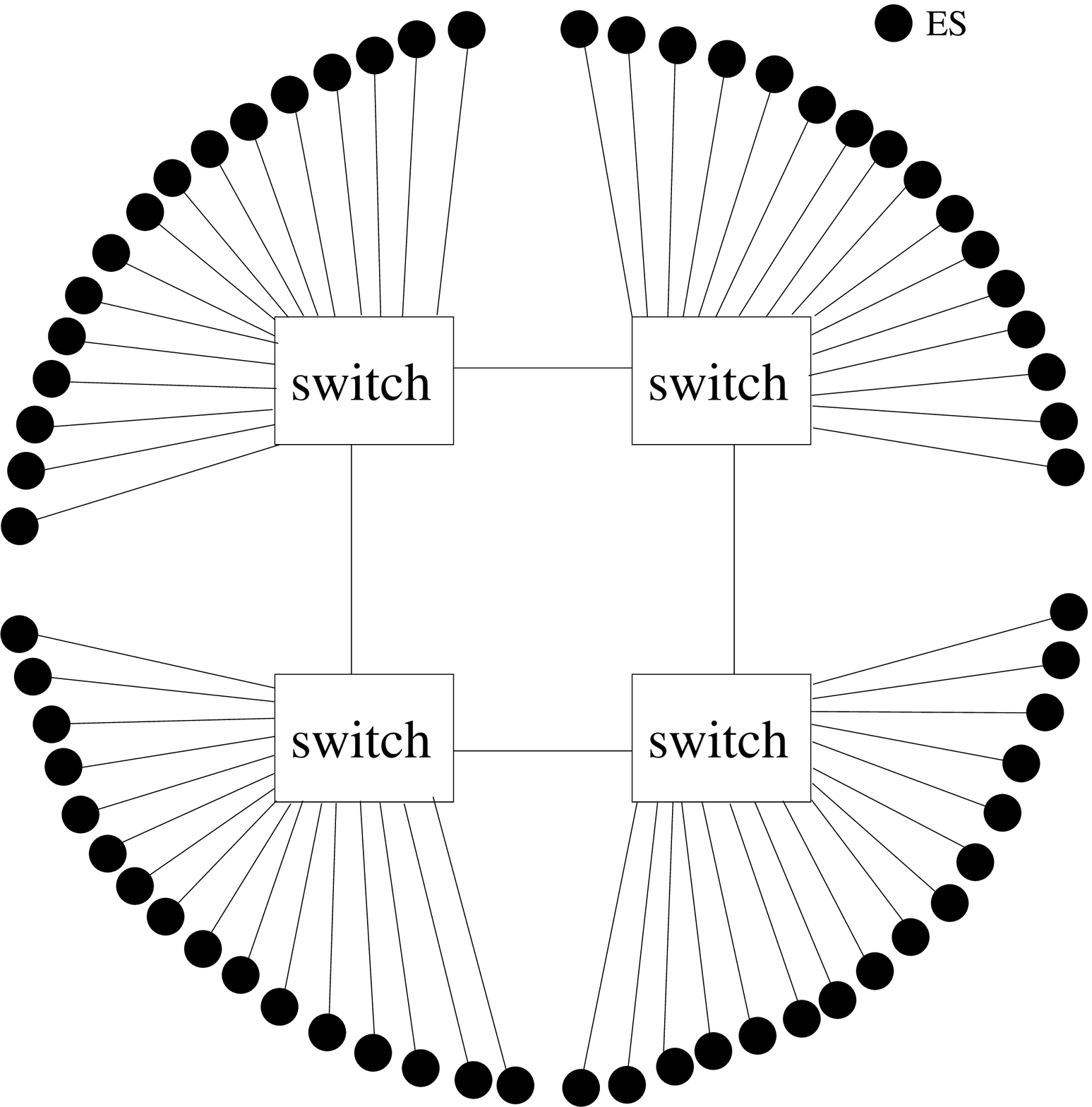}}
		\centering
		\subfigure[]{\includegraphics[width=0.45\linewidth]{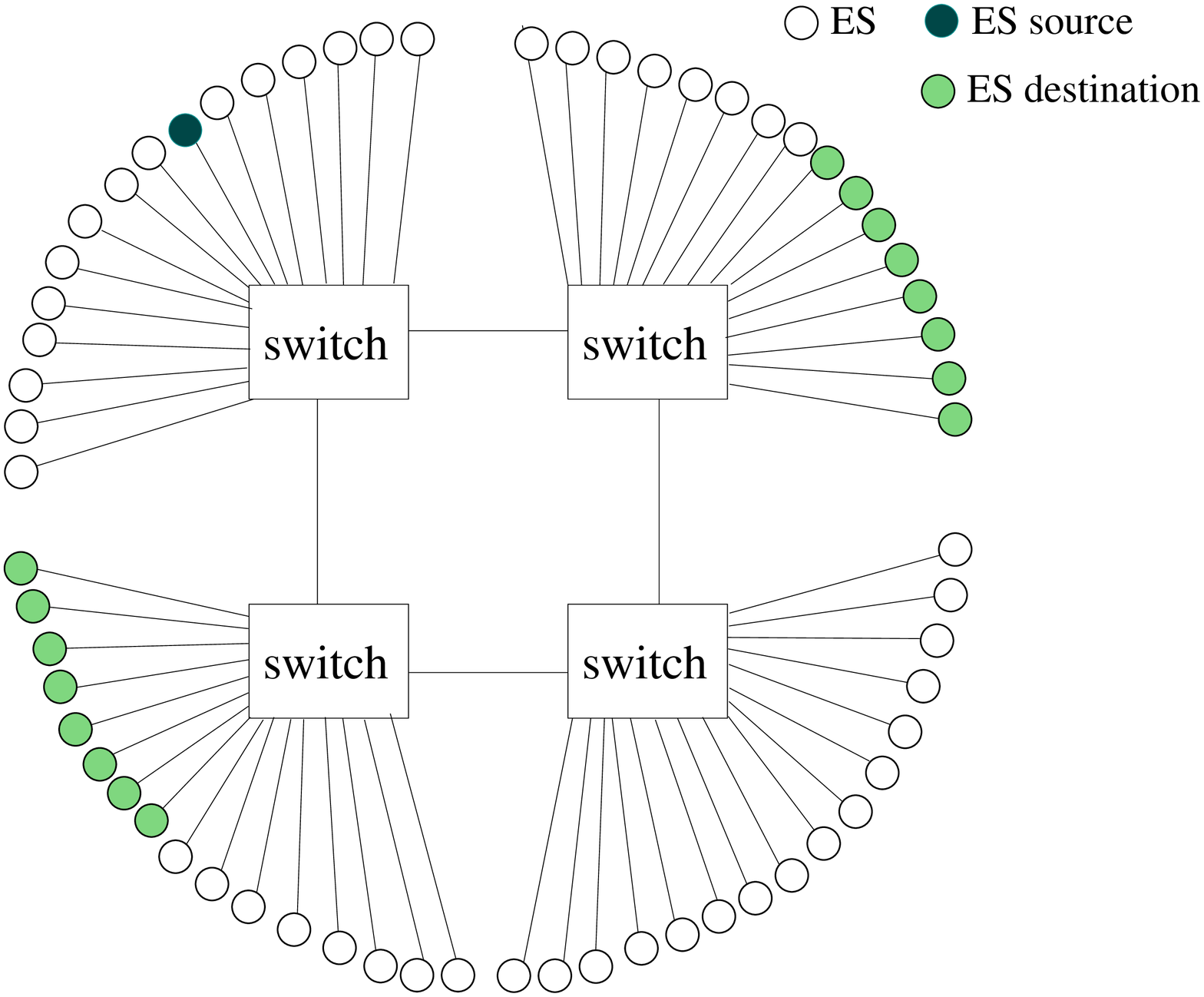}}
		\footnotesize \caption{Representative AFDX network: (a) Architecture; (b) Traffic communication patterns}
		\label{fig:rlfullnetwork}
	\end{figure}

	\begin{table}[h!]
		\footnotesize
		\begin{center}
			\begin{tabular}{|c|c|c|}
				\hline
				Scenarios &  Scenario 3& Scenario 4  \\
				\hline
				$(UR^{bn}_{SCT};UR^{bn}_{RC}) (\%)$ & $([0.4.. 45 ];20)$ & $(20;[2.. 80])$\\
				\hline
				$(n_{SCT}^{es};n_{RC}^{es})$&  $([1:4:110];10)$ & $(47;[1:2:39])$    \\
				\hline
				$(BW;L_M;L_R)$ & $(0.46;22,118;0)$ & $(0.46;22,118;0)$  \\
				\hline
			\end{tabular}
		\end{center}
		\footnotesize \caption{Considered Test Scenarios 3 and 4}
		\label{table34}
	\end{table}
	
	\subsection{Analysing the scalability of the current avionics configuration}
	\label{scalability}

		\begin{table}[h!]
			\footnotesize
			\begin{center}
				\begin{tabular}{|c|c|c|c|}
					\hline
					Configurations  & Legacy reference & Scenario 1& Scenario 2 \\
					\hline
					$UR^{bn}_{SCT}(\%)$& $28.7$ & $43$ & $28.7$ \\
					\hline
					$UR^{bn}_{RC} (\%)$& $3$ & $3$ & $13$ \\
					\hline
					$(L_M;L_R)$ & N.A& $(10,240;0)$  & $(35,840;0)$ \\
					\hline
					$BW$ & N.A& $0.90$  & $0.65$ \\
					\hline				
				\end{tabular}
			\end{center}
			\footnotesize \caption{Results of Scenarios 1 and 2}
			\label{res12}
		\end{table}
	
	The aim of this section is to analyse the scalability of our proposed Extended AFDX, in comparison with the legacy AFDX. Hence, starting from the \textit{legacy reference configuration} characterised by ($UR^{bn}_{RC}=3\%$ , $UR^{bn}_{SCT}=28.7\%$), we have tested for scenario 1 (resp. scenario 2) various BLS parameters to increase as much as possible the maximum SCT (resp. RC) utilisation rate. As shown in Table \ref{res12}, there exists a BLS configuration for scenario 1 (resp. scenario 2) allowing to achieve an utilisation rate of SCT (resp. RC) of  $UR^{bn}_{SCT} = 43\%$ ( resp. of $UR^{bn}_{RC}= 13\%$) under Extended AFDX, instead of only $UR^{bn}_{SCT} = 28.7\%$ ($UR^{bn}_{RC} = 3\%$) under legacy AFDX.
	
	\textbf {\textit { These results show an enhancement of scalability with the Extended AFDX of 50\% and 333\% for the SCT and RC classes, respectively, in comparison with the legacy AFDX.}}
	
	\subsection{Analysing timing performance of future avionics configurations}
	\label{timing}
	\begin{figure}[htbp]
		\centering
		\subfigure[]{\includegraphics[width=0.22\textwidth, angle=270]{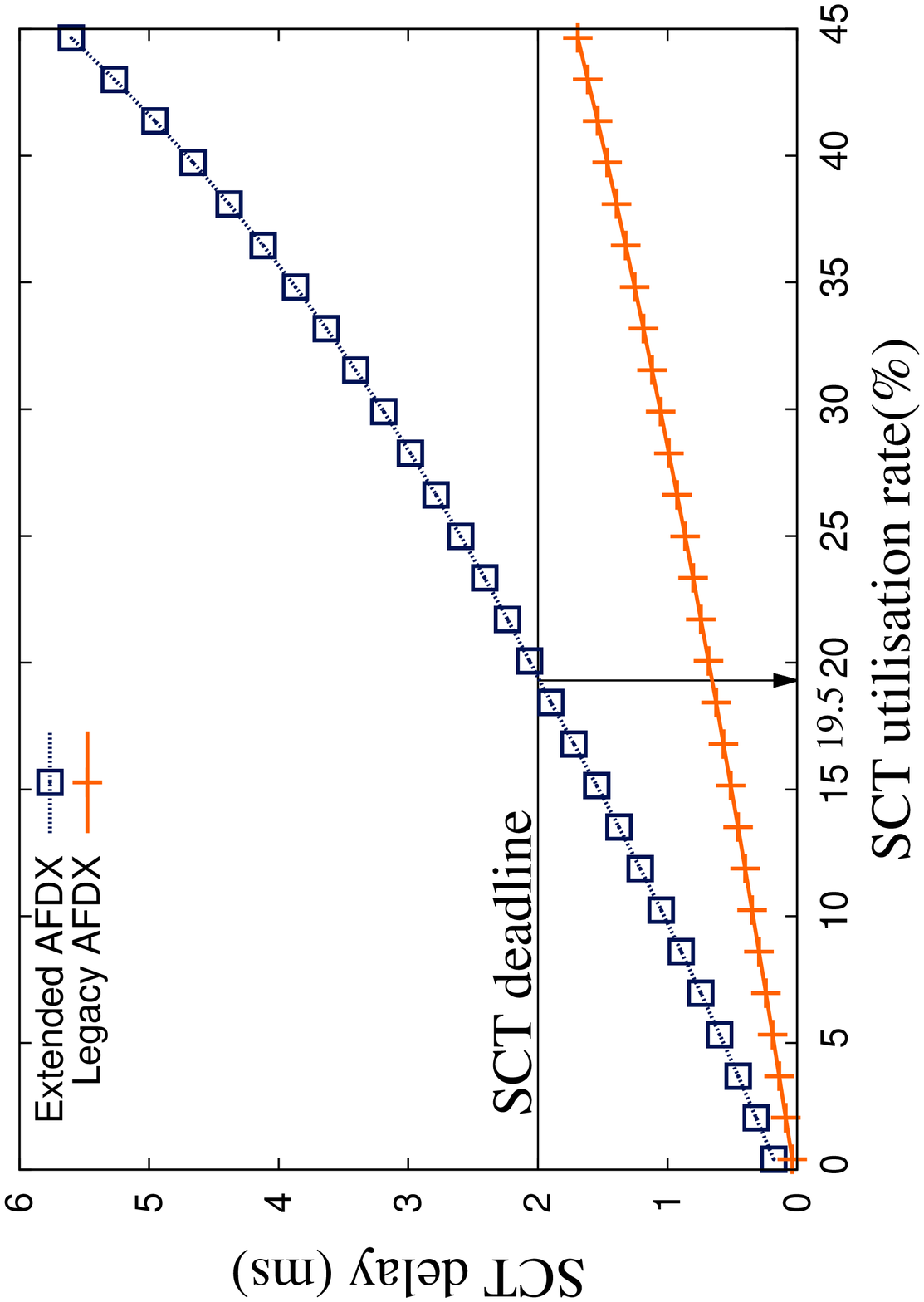}}
		\centering
		\subfigure[]{\includegraphics[width=0.22\textwidth, angle=270]{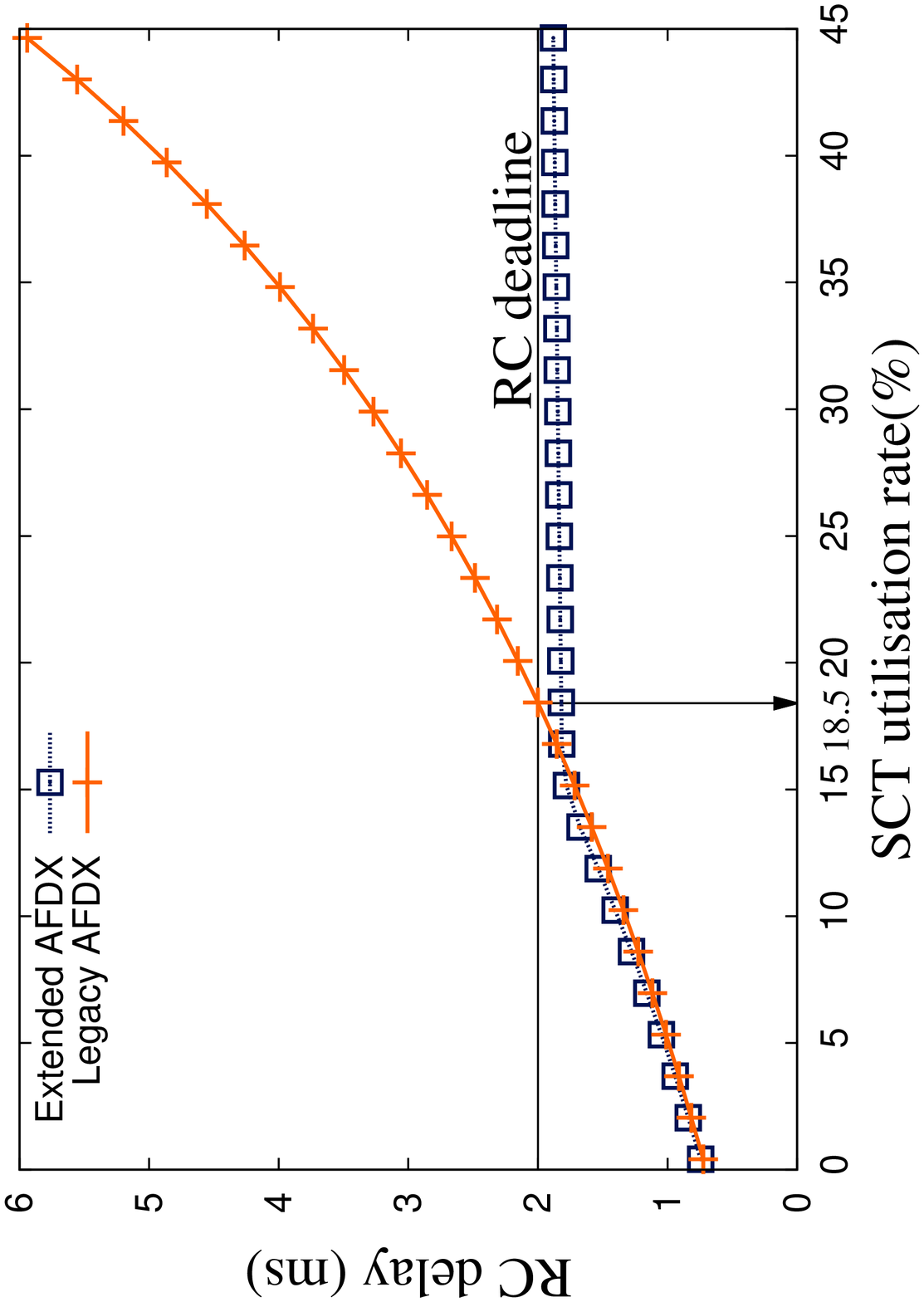}}
		\footnotesize \caption{Scenario 3: Impact of SCT max. utilisation rate on: (a) SCT delays; (b) RC delays}
		\label{fig:SCTImpact}
	\end{figure}

	\textbf{SCT timing performance}\\
	To analyse the timing performance of SCT when using the Extended AFDX instead of the legacy AFDX, we focus on Figures \ref{fig:SCTImpact}(a) and \ref{fig:AFDXImpact}(a) showing the SCT delay bounds evolution, regarding the SCT and RC bottleneck utilisation rate variation, respectively.
	
	As shown in Fig. \ref{fig:SCTImpact} (a), when increasing the bottleneck utilisation rate of the SCT, the delay upper bounds are obviously increasing under both solutions, but are globally higher under the Extended AFDX. This fact is due to the BLS behaviour on top of the NP-SP scheduler implemented within the Extended AFDX, which infers dividing the SCT burst to be sent within many \textit{sending windows}; whereas the regular NP-SP scheduler implemented within the legacy AFDX is sending the SCT burst all at once. 
	
	On the other hand, as it can be noticed in Fig. \ref{fig:AFDXImpact} (a), when increasing the bottleneck utilisation rate of the RC, the SCT delay bounds are constant under the legacy AFDX since SCT has the highest priority level and is at most delayed by a maximum sized frame of lower priorities, i.e., RC and BE; whereas they are increasing under the Extended AFDX for a RC utilisation rate up to 20\% and become equal to the SCT deadline (2ms) for a RC utilisation rate higher than 20\%. The increase is due to the fact that the RC rate is not large enough to use all the bandwidth guaranteed by the BLS; thus the guaranteed SCT service within the BLS is limited by the left part of the Eq.(\ref{eq:swSCT}). This shows the good isolation level, enforced by the BLS, between the mixed-criticality traffic, i.e., RC and SCT. 	
	
	\textbf {\textit {These results show the impact of the Extended AFDX network on SCT when increasing the network congestion. The main interesting feature to highlight is its efficiency to guarantee a high isolation level between the mixed criticality traffic, which is one of the key requirements for avionics applications.}}\\
	
	\begin{figure}[htbp]
		\centering
		\subfigure[]{\includegraphics[width=0.23\textwidth, angle=270]{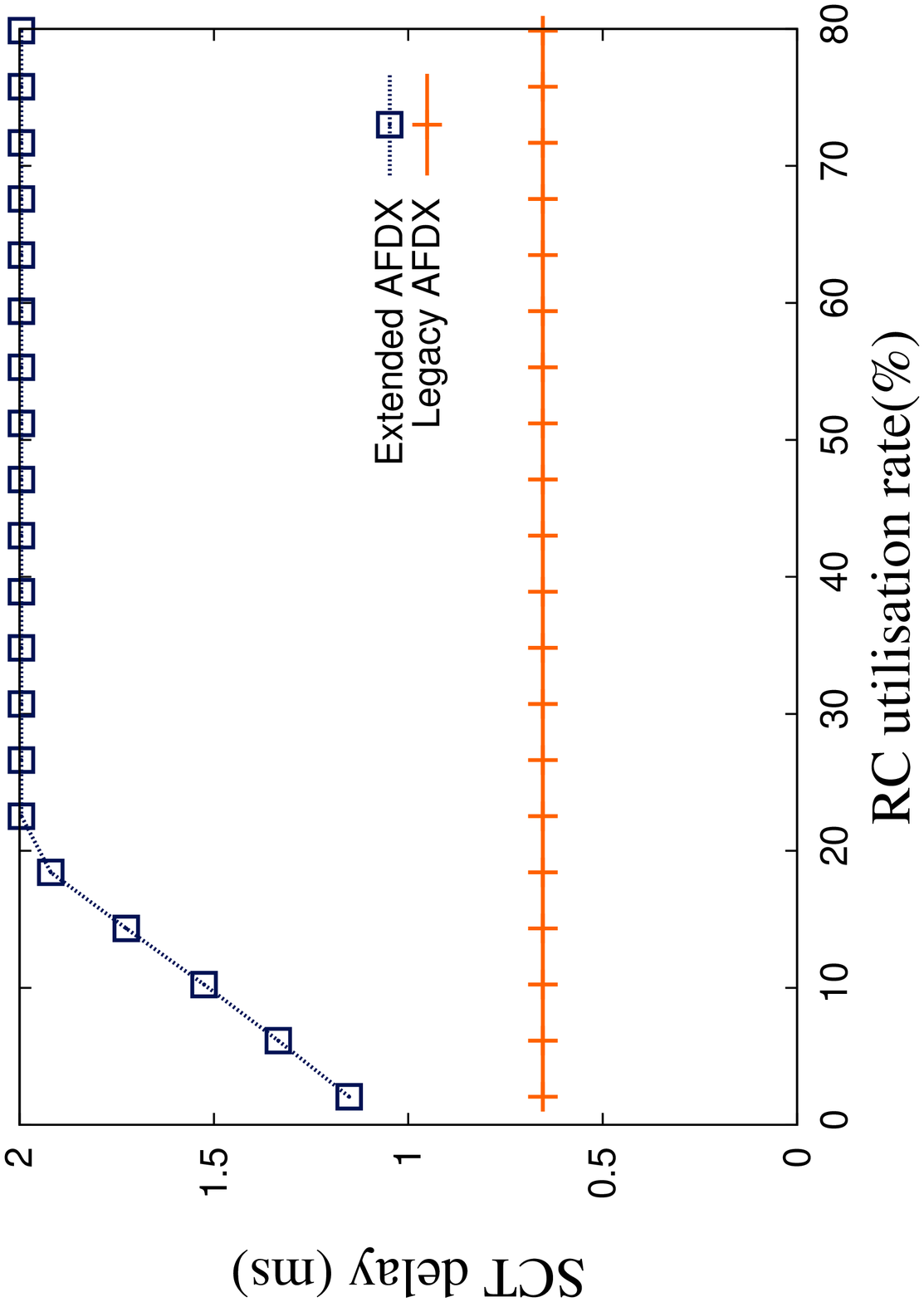}}
		\centering
		\subfigure[]{\includegraphics[width=0.23\textwidth, angle=270]{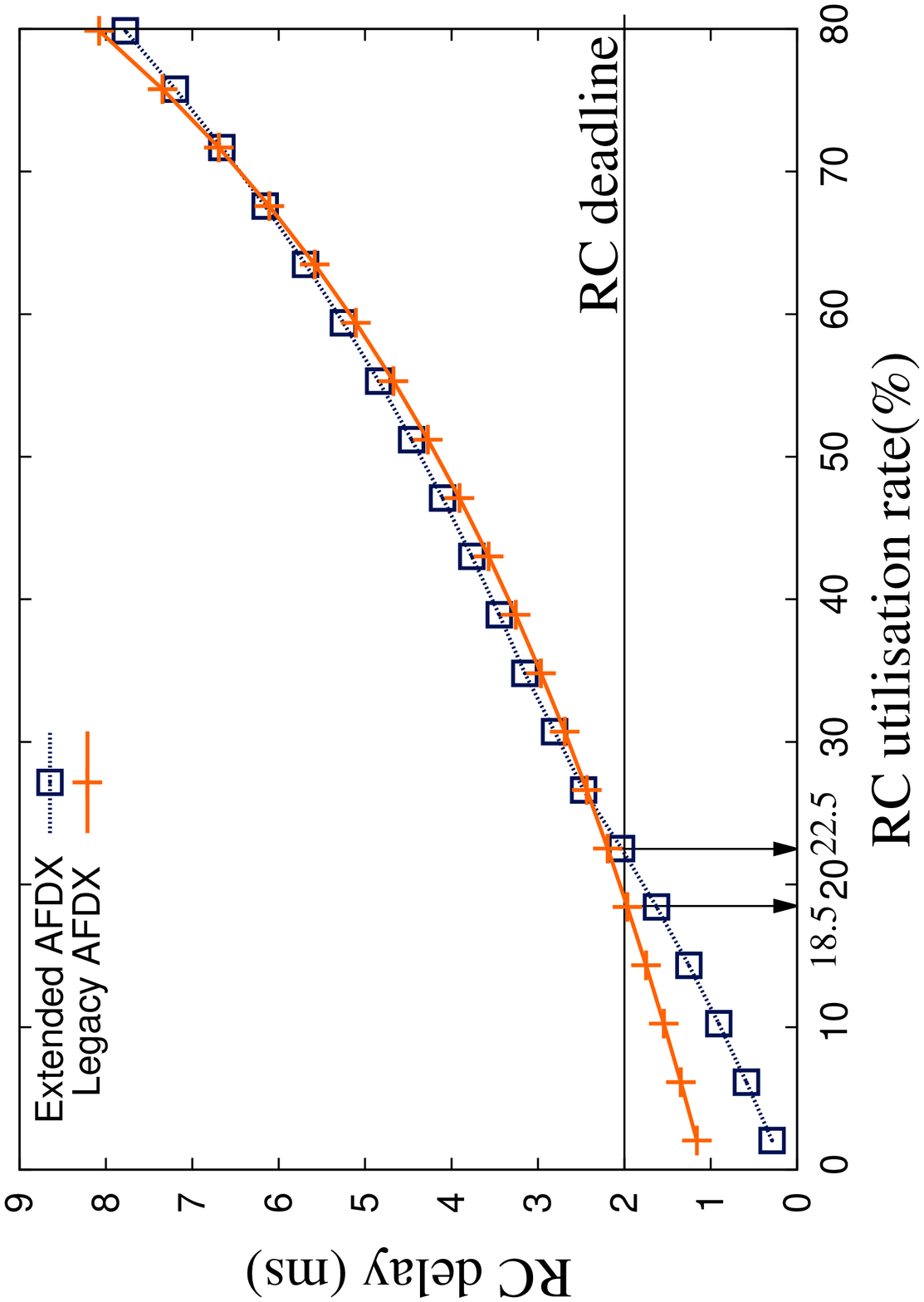}}
		\footnotesize \caption{Scenario 4: Impact of RC max. utilisation rate on: (a) SCT delays; (b) RC delays}
		\label{fig:AFDXImpact}
	\end{figure}
	\textbf{RC timing performance}\\
	We detail herein the main interesting results concerning the impact of our proposed solution on the RC timing performance, based on Figures \ref{fig:SCTImpact}(b) and \ref{fig:AFDXImpact}(b).
	
	Fig. \ref{fig:SCTImpact} (b) illustrates the variation of the RC delay upper bounds in terms of the SCT bottleneck utilisation rate. We can easily distinguish two phases on this figure. The first one is observed for a maximum utilisation rate below 14\%, where the delay upper bounds under both solutions are very similar. The second phase, i.e., when the maximum utilisation rate is higher than 14\%, shows that the delay upper bounds increase inherently under legacy AFDX, whereas they are constant under the Extended AFDX. 
	
	These results are coherent with the guaranteed service to the RC traffic in Th. \ref{th:bls-min-servicel}, which is the maximum between $\beta_{RC}^{sp}$ and $\beta_{RC}^{bls}$. Hence, during the first phase, the service corresponds to $\beta_{RC}^{sp}$, which is impacted by the maximum arrival curve of the SCT; thus its maximum utilisation rate. This fact explains the delay bounds increase. Afterwards, the service becomes related to $\beta_{RC}^{bls}$ during the second phase, which enforces a maximum constraint on the arrival curve of the SCT under the Extended AFDX due to the BLS, $\gamma_{SCT}^{bls}$. This maximum constraint implies a constant delay under the Extended AFDX. On the other hand, the service guaranteed under legacy AFDX is deeply related to the arrival curve of SCT, which explains the inherent delay bound increase. 
	
	Fig. \ref{fig:AFDXImpact} (b) shows the impact of the RC utilisation rate variation on the RC delay bounds. As it can be noticed, the RC delay bounds are increasing under both solutions, but still are better under Extended AFDX. For instance, for a RC utilisation rate of 10\%, we observe a delay bound of $1.5$ms and $0.9$ms under the legacy and Extended AFDX, respectively; thus the enhancement of the delay bound is about 40\% at $UR_{RC}^{bn}=10\%$, and it goes up to 74\% at $UR_{RC}^{bn}=2\%$.
	
	We need also to highlight the scalability enhancement in terms of utilisation rate of SCT and RC traffic under the Extended AFDX, in comparison to the legacy AFDX, which is coherent with the conclusions of the Section \ref{scalability}. In scenario 3 (resp. scenario 4), the maximum SCT (resp. RC) utilisation rate, respecting all the constraints, is 19.5\% (resp. 22.5\%) and 18.5\% (resp. 18.5\%) with Extended and legacy AFDX, respectively. 
	
	\textbf {\textit{These results show the valuable impact of the Extended AFDX on RC traffic, in comparison with the legacy AFDX solution. We can distinguish two interesting features: (i) the first one concerns the noticeable RC delay bounds decrease, where they become constant after a given SCT utilisation rate; (ii) the second one is the enhancement of the RC delay bound under the Extended AFDX when varying the RC utilisation rate, which is up to 74\%.}}

	\section{Conclusions}
	In this paper, we have defined and analysed the timing performance of an Extended AFDX network, to handle mixed criticality avionics applications. The proposed solution implements a BLS shaper on top of NP-SP within AFDX switches, to manage three priority levels. The conducted performance analysis on a realistic avionics case study highlights the benefit of using such a proposal, to isolate the highest priority and mitigate the impact of highest priority traffic on the RC one. Numerical results have shown noticeable enhancements of the delay upper bounds of the RC traffic (up to 74\%), and a gain in terms of maximum utilisation rate up to 333\% for RC and 50\% for SCT, in comparison with the legacy AFDX network.
	As a next step, we will introduce a tuning method to find the best BLS parameters, which respect the highest priority traffic deadline, while decreasing as much as possible the RC delay bounds. 

	\bibliographystyle{plain}
	\bibliography{biblio} 
	
\end{document}